\setlist[itemize]{label=$\triangleright$,leftmargin=6mm,itemsep=0pt,topsep=0pt}
\newtheorem{theorem}{Theorem}
\newtheorem{lemma}[theorem]{Lemma}
\newtheorem{fact}[theorem]{Fact}
\newtheorem{corollary}[theorem]{Corollary}
\newtheorem{definition}[theorem]{Definition}
\newtheorem*{remark}{Remark}
\DeclareMathOperator{\R}{\mathbb{R}}
\newcommand{\E}{\mathbb{E}}
\newcommand{\cM}{\mathcal{M}}
\newcommand{\cX}{\mathcal{X}}
\newcommand{\cF}{\mathcal{F}}
\newcommand{\cR}{\mathcal{R}}
\newcommand{\bx}{\mathbf{x}}
\newcommand{\eps}{\varepsilon}
\newcommand{\Z}{\mathbb{Z}}
\newcommand{\ind}{\mathbf{1}}
\newcommand{\Lap}{\mathrm{Lap}}
\newcommand{\cT}{\mathcal{T}}
\newcommand{\parent}{\mathrm{parent}}
\newcommand{\VR}{\mathrm{VRand}}
\newcommand{\bareps}{\bar{\eps}}
\newcommand{\mmse}{\mathrm{mMSE}}
\newcommand{\mse}{\mathrm{MSE}}
\newcommand{\hz}{\hat{z}}
\newcommand{\bA}{\mathbb{A}}
\newcommand{\by}{\mathbf{y}}
\newcommand{\tW}{\tilde{W}}
\newcommand{\tOmega}{\tilde{\Omega}}
\newcommand{\sgn}{\mathrm{sgn}}
\newcommand{\BN}{\mathbb{N}}
\newcommand{\N}{\mathbb{N}}
\newcommand{\tL}{\tilde{L}}
\newcommand{\tR}{\tilde{R}}
\newcommand{\argmin}{\mathrm{argmin}}
\newcommand{\clip}{\mathrm{clip}}
\definecolor{Gred}{RGB}{219, 50, 54}
\definecolor{Ggreen}{RGB}{60, 186, 84}
\definecolor{Gblue}{RGB}{72, 133, 237}
\definecolor{Gyellow}{RGB}{247, 178, 16}
\definecolor{ToCgreen}{RGB}{0, 128, 0}
\definecolor{myGold}{RGB}{231,141,20}
\definecolor{myBlue}{rgb}{0.19,0.41,.65}
\definecolor{myPurple}{RGB}{175,0,124}
\title{On Computing Pairwise Statistics\\ with Local Differential Privacy}
\author{
Badih Ghazi \\
Google Research\\
Mountain View, CA, US \\
\texttt{badihghazi@gmail.com}
\And
Pritish Kamath\\
Google Research\\
Mountain View, CA, US \\
\texttt{pritish@alum.mit.edu}
\And
Ravi Kumar\\
Google Research\\
Mountain View, CA, US \\
\texttt{ravi.k53@gmail.com}
\And
Pasin Manurangsi\\
Google Research\\
Bangkok, Thailand\\
\texttt{pasin@google.com}
\And
Adam Sealfon\\
Google Research\\
New York, NY, US \\
\texttt{adamsealfon@google.com}
}
\date{}
\begin{document}

\maketitle

\begin{abstract}
We study the problem of computing pairwise statistics, i.e., ones of the form $\binom{n}{2}^{-1} \sum_{i \ne j} f(x_i, x_j)$, where $x_i$ denotes the input to the $i$th user, with differential privacy (DP) in the local model. This formulation captures important metrics such as Kendall's $\tau$ coefficient, Area Under Curve, Gini's mean difference, Gini's entropy, etc. 
We give several novel and generic algorithms for the problem, leveraging techniques from DP algorithms for linear queries. 
\end{abstract}

\section{Introduction}

Differential privacy (DP) \cite{DworkMNS06} is a widely studied and used notion for quantifying the privacy protections of algorithms in data analytics and machine learning. It has witnessed many practical deployments \cite{erlingsson2014rappor,CNET2014Google, greenberg2016apple,dp2017learning, ding2017collecting, abowd2018us, tf-privacy, pytorch-privacy, LinkedINDP1, LinkedInDP2}. On a high level, DP dictates that the output of a (randomized) algorithm remains statistically indistinguishable if the input of any single user is modified; the degree of statistical indistinguishability is quantified by the privacy parameter $\eps > 0$.

An important setting of DP is the \emph{local} model \cite{evfimievski2003limiting,KasiviswanathanLNRS11}, where each of $n$ users holds an input that they wish to keep private. An analyst wishes to compute some known function of the users inputs. The analyst and the users engage in an interactive protocol at the end of which the analyst is supposed to compute an estimate to the value of the function on the user inputs. When the aforementioned statistical indistinguishability property is enforced on the algorithm's transcript, the algorithm is said to be $\eps$-local DP (see Section~\ref{sec:preliminaries} for a formal definition). \emph{Non-interactive} algorithms refer to the setting where each user sends a single (DP) message to the analyst, who is then supposed to output an estimate of the desired value without any further interaction with the users. As usual in the interactive local DP setting, we assume a broadcast model where every communication is visible to all users and to the analyst (and subject to the DP constraint). The number of rounds of an interactive protocol refers to the number of back-and-forths between the set of users and the analyst.

While the local DP setting offers a compelling trust model compared to the central DP model (where an analyst is assumed to have access to the raw user data and the privacy guarantee is only enforced at its output), the local setting is often limited by lower bounds on the error incurred by private protocols (e.g., \cite{beimel2008distributed, chan2012optimal}). In this work, we study some basic tasks in analytics and learning, and give local DP protocols with significantly smaller error than what was previously known.

\textbf{Quadratic Form Computation.}
Throughout the paper, we consider the following task:

\begin{definition}[Quadratic Form Computation]
Given a matrix $W \in \R^{k \times k}$. Each user $i$ receives $x_i \in [k]$. The goal is to compute the quadratic form $h_{\bx}^T W h_{\bx}$ where $h_{\bx} \in \Z_{\geq 0}^k$ denotes the normalized histogram of the input, i.e., $(h_{\bx})_b := \frac{1}{n} \cdot |\{i \in [n] \mid x_i = b\}|$.
\end{definition}
An algorithm computing quadratic forms immediately implies an algorithm for computing pairwise statistics (also known as \emph{$U$-statistics of degree $2$}) defined as $\cF(X) = \frac{1}{\binom{n}{2}} \sum_{1\leq i < j \leq n} f(x_i, x_j)$, where $f: \cX^2 \to \R$ is a symmetric function (called \emph{kernel}), and $x_i$ is the input to the $i$th user. The family of pairwise statistics is notable in that it contains several statistical quantities that are widely used in different areas, including the Area Under the Curve (AUC), Kendall's $\tau$ coefficient, the Gini's mean difference, and the Gini's diversity index (aka Gini--Simpson index or Gini's impurity). 
Computing quadratic forms has been studied in the context of local DP by Bell et al.~\cite{BellBGK20} who gave algorithms for functions $f$ that are Lipschitz and for estimating the AUC of a classifier; the latter was improved recently~\cite{cormode2022federated}.
While these problems are simple in central DP (as we can directly add noise to the function value), we note that a clear challenge for computing pairwise statistics (and hence quadratic forms) in local DP is that each summand depends on data points held by two different users. 

\textbf{Linear Queries.}
Computation of quadratic forms is a natural ``degree-2'' variant of the so-called \emph{linear queries} (defined below), a well-studied problem in the DP literature.

\begin{definition}[Linear Queries]
Given a \emph{workload matrix} $W \in \R^{k \times k}$,
the goal is to compute $W h_{\bx}$, where $h_{\bx} \in \Z_{\geq 0}^k$ denotes the normalized histogram of the input.
\end{definition}

There is a long line of work on privately answering linear queries. In both the central and local models, the optimal errors achievable by DP algorithms are (mostly) well-understood for various types of errors, such as the $\ell_2^2$-error or the $\ell_\infty$-error~\cite{HardtT10,BhaskaraDKT12,LiMHMR15,NikolovTZ16,BunUV18,BlasiokBNS19,EdmondsNU20,Nikolov23}.  Recall that the \emph{MSE} of an estimate $\hz$ of a real value $z$ is defined as $\mse(\hz, z) := \E_{\hz}[(\hz - z)^2].$  For our purpose, we consider the following natural measure of accuracy:
\begin{definition}[mMSE]
Given a mechanism $\cM$ that answers linear queries for a workload matrix $W \in \R^{k \times k}$, its \emph{maximum mean square error (mMSE)} is defined as $\mmse(\cM; W) := \max_{\bx \in [k]^n} \max_{j \in [k]} \mse(\cM(\bx)_j, (W h_{\bx})_j)$.
\end{definition}

In a recent seminal work, Edmonds et al.~\cite{EdmondsNU20} provide a nearly optimal characterization of the error achievable by non-interactive $\eps$-local DP algorithms. To describe their result, we need some additional definitions of norms on matrices and related quantities:
\begin{itemize}
\item $1 \to 2$ norm: For any matrix $A$, $\|A\|_{1 \to 2}$ is the maximum ($\ell_2$-)norm of its columns.
\item $\ell_\infty$-norm: For any matrix $A$, $\|A\|_{\infty}$ is the maximum absolute value among its entries.
\item Factorization norm: For $W \in \R^{k \times k}$, let $\gamma_2(W) := \min_{L^T R = W} \|L\|_{1 \to 2} \|R\|_{1 \to 2}$.
\item Approximate-Factorization norm\footnote{Note that, unlike the three quantities above, this is not actually a norm.}: For $W \in \R^{k \times k}$ and $\alpha \geq 0$, let $\gamma_2(W; \alpha) := \min_{\|\tW - W\|_\infty \leq \alpha} \gamma_2(\tW)$.
\end{itemize}

Finally, we define $\zeta(W, n) := \min_{\alpha \geq 0} \left(\gamma_2(W; \alpha) + \alpha \eps\sqrt{n} \right)$. (Note that $\zeta(W, n) \leq \gamma_2(W)$ as we can pick $\alpha = 0$.) The result of Edmonds et al.~\cite{EdmondsNU20} states that this quantity, up to polylogarithmic factors, governs the best error achievable  for linear queries in the non-interactive local DP model\footnote{We remark that \cite{EdmondsNU20} did not state their bounds in the form we present in \Cref{thm:enu-convenient} (or even for $\mmse$); we provide more detail on how to interpret their lower bound in this form in \Cref{app:enu}. The upper bound is via the \emph{matrix mechanism}, introduced in central DP earlier in~\cite{LiMHMR15,NikolovTZ16}.}:

\begin{theorem}[\cite{EdmondsNU20}] \label{thm:enu-convenient}
For any workload matrix $W$, there is a 
non-interactive $\eps$-local DP mechanism for linear queries with $\mmse$ at most $O\left(\frac{\zeta(W, n)^2}{\eps^2 n}\right)$. Furthermore, any 
non-interactive $\eps$-local DP mechanism must incur $\mmse$ at least $\tOmega\left(\frac{\zeta(W, n)^2}{\eps^2 n}\right)$. 
\end{theorem}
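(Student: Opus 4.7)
My plan is to adapt the matrix mechanism to the local model. Fix $\alpha \ge 0$ nearly achieving the minimum in $\zeta(W,n)$, choose $\tilde W$ with $\|\tilde W - W\|_\infty \le \alpha$ and $\gamma_2(\tilde W) = \gamma_2(W;\alpha)$, and pick a factorization $\tilde W = L^T R$ with $\|L\|_{1\to 2} \|R\|_{1\to 2} = \gamma_2(\tilde W)$; by rescaling we may assume $\|L\|_{1\to 2} = \|R\|_{1\to 2} = \sqrt{\gamma_2(\tilde W)}$. User $i$ holds the indicator vector $e_{x_i}$, forms $v_i := R e_{x_i}$ (so $\|v_i\|_2 \le \|R\|_{1\to 2}$), and sends an $\eps$-LDP unbiased estimate $\hat v_i$ of $v_i$ via a standard LDP vector-mean primitive. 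The analyst averages to obtain $\hat y = \tfrac{1}{n} \sum_i \hat v_i$ and outputs $L^T \hat y$ as the estimate of $W h_{\bx}$. Two sources contribute to the per-coordinate MSE: the deterministic bias, bounded in $\ell_\infty$ by $\|(\tilde W - W) h_{\bx}\|_\infty \le \alpha$; and the variance of $\ell_j^T \hat y$, which an appropriate LDP primitive bounds by $O(\|\ell_j\|_2^2 \|R\|_{1\to 2}^2/(n\eps^2)) = O(\gamma_2(\tilde W)^2/(n\eps^2))$. Adding these and using $(\gamma_2(W;\alpha) + \alpha\eps\sqrt{n})^2 \ge \gamma_2(W;\alpha)^2 + \alpha^2 \eps^2 n$, then optimizing $\alpha$, yields the claimed $O(\zeta(W,n)^2/(\eps^2 n))$ bound.

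\textbf{Lower bound.} My plan is to follow the Edmonds--Nikolov--Ullman strategy and extract from any non-interactive $\eps$-LDP mechanism with small $\mmse$ an approximate factorization of $W$ with correspondingly small $\gamma_2$-norm. Such a mechanism consists of a per-user randomizer $Q$ and an aggregator; after a linearization step (assuming, without loss of generality, a linear aggregator $A$ acting on an appropriate sufficient statistic of the received messages), the $j$-th output coordinate equals $\tfrac{1}{n} \sum_i A_j^T Q(x_i)$. Letting $R$ be the matrix whose $b$-th column is $\E[Q(e_b)]$ and setting $L^T := A$, the expectation of the output as a function of $h_{\bx}$ is $L^T R \, h_{\bx}$, which by the $\mmse$ hypothesis approximates $W h_{\bx}$ uniformly to $\ell_\infty$-error $\alpha = \tO(\sqrt{\mmse})$. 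The $\eps$-LDP constraint forces each (centered) column of $R$ to have $\ell_2$-norm $O(1/\eps)$ times the per-user message variance, while a direct variance calculation controls $\|L\|_{1\to 2}^2$ by $n \cdot \mmse$; combining gives $\gamma_2(\tilde W) \le \tO(\eps\sqrt{n \cdot \mmse})$, which together with the bias bound $\alpha \le \tO(\sqrt{\mmse})$ implies $\zeta(W,n) \le \tO(\eps\sqrt{n \cdot \mmse})$, i.e.\ the stated lower bound on $\mmse$.

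\textbf{Main obstacle.} The difficult direction is the lower bound: rigorously linearizing the protocol (handling randomized post-processing and arbitrary message spaces) and obtaining the right constants in the passage to the approximate factorization requires a careful chi-square/Hellinger analysis of the LDP output marginals together with a fingerprinting-style coupling to pick a hard distribution over datasets; this is also where the $\tOmega$ absorbs polylogarithmic losses. The upper bound is comparatively mechanical once the right LDP vector-mean primitive is fixed; the key subtlety there is obtaining \emph{per-coordinate} (rather than $\ell_2$-aggregate) variance scaling like $\|R\|_{1\to 2}^2/(n\eps^2)$, which one secures either by combining a Kashin-type representation with coordinate-wise scalar LDP noise, or by uniform coordinate sampling followed by a 1-bit randomized response.
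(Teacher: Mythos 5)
Your upper bound is essentially the paper's: the matrix mechanism in the local model, using an LDP vector-mean primitive on $R e_{x_i}$ and applying $L^T$ at the analyst, with $\alpha$ chosen to optimize the bias/variance tradeoff. The paper realizes the per-direction variance via the sub-Gaussian vector randomizer of \Cref{thm:vec-rand}, which already gives $\langle \ell_j, \hat y - R h_{\bx}\rangle$ sub-Gaussian with parameter $O(\|\ell_j\|_2 \|R\|_{1\to 2}/(\eps\sqrt{n}))$; your worry about per-coordinate versus $\ell_2$-aggregate scaling is addressed directly by that primitive, without needing the Kashin or coordinate-sampling detour.

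The lower bound is where you diverge from the paper, and I think the gap is real. Your plan is ``primal'': given a low-mMSE protocol, linearize it to obtain $L, R$ witnessing a small $\gamma_2(W;\alpha)$. But the step ``assuming, without loss of generality, a linear aggregator acting on an appropriate sufficient statistic'' is not WLOG, and is in fact the entire difficulty: a generic aggregator is a nonlinear function of the $n$ messages, its expectation as a function of $h_{\bx}$ need not be affine, and nothing forces the LDP randomizer's mean map $b \mapsto \E[Q(e_b)]$ to have columns with the small $\ell_2$-norms your $\gamma_2$ bound requires. The paper (and \cite{EdmondsNU20}) does something structurally different and ``dual'': it uses the dual SDP/LP witness of the approximate factorization norm to construct hard pairs of distributions $(\tlambda_v, \tmu_v)$ and a prior $\tpi$ (\Cref{lem:hard-dist-easy}), bounds $\E_{v\sim\tpi}[d_{\KL}(\cR(\tlambda_v)^n \,\|\, \cR(\tmu_v)^n)]$ by $n\eps^2\|\tM\|^2_{\ell_\infty\to L_2(\tpi)}$ via \Cref{lem:transcript-kl-dist}, and then runs a hypothesis-testing argument with Pinsker's inequality to contradict low mMSE (\Cref{thm:linear-lb-full}). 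No factorization is extracted from the mechanism at any point; the lower bound never needs to ``linearize'' the protocol. Your closing description of the obstacle (``chi-square/Hellinger analysis \ldots\ fingerprinting-style coupling'') also misidentifies the resolution: the argument here is a KL-divergence packing bound driven by the dual witness, and fingerprinting is the wrong tool for the non-interactive local model. So the upper-bound half of your proposal is fine and matches the paper, but the lower-bound half should be replaced by the dual-witness hypothesis-testing argument if you want it to close.
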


Given such a tight and generic characterization for linear queries, it is natural to ask:
\begin{center}
{\sl Can we characterize the error of  computing the quadratic form for any matrix $W$ with local DP?}
\end{center}

\subsection{Our Contributions}

In this work, we answer this question by establishing connections between the problems of computing quadratic forms and linear queries. Leveraging the wealth of knowledge on the latter, we obtain several new (and general) upper and lower bounds for the former.

\textbf{Non-Interactive Local DP.} First, in the non-interactive setting, we give the following upper and lower bounds. Together, they show that the error one can expect for computing quadratic forms is essentially the same as that for computing linear queries (as provided in \Cref{thm:enu-convenient}).

\begin{theorem}[Non-interactive Algorithm] \label{thm:alg-main}
For any $W \in \R^{k \times k}$, there is a non-interactive $\eps$-local DP mechanism for estimating quadratic forms on $W$ with $\mse$ at most $O\left(\frac{\zeta(W, n)^2 (\log k)}{\eps^2 n}\right)$. 
\end{theorem}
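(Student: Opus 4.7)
The plan is to reduce the quadratic form $h_\bx^T W h_\bx$ to the product of two independent linear queries, each handled by \Cref{thm:enu-convenient}. Fix a parameter $\alpha \ge 0$ (to be optimized at the end) and let $\tW$ witness $\gamma_2(W;\alpha)$, so that $\|\tW - W\|_\infty \le \alpha$ and $\gamma_2(\tW) \le \gamma_2(W;\alpha)$; factor $\tW = L^T R$ with $\|L\|_{1\to 2}\|R\|_{1\to 2} = \gamma_2(\tW)$. Since $\|h_\bx\|_1 = 1$,
\[
h_\bx^T W h_\bx \;=\; (L h_\bx)^T (R h_\bx) + \delta,\qquad |\delta| \le \alpha,
\]
so it suffices to estimate $u := L h_\bx$ and $v := R h_\bx$ and return $\hat u^T \hat v$. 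Each user participates in two independent non-interactive $\tfrac{\eps}{2}$-LDP linear-query protocols from \Cref{thm:enu-convenient} with workloads $L$ and $R$ respectively (using the trivial factorizations $L = I \cdot L$, $R = I \cdot R$, which give $\gamma_2(L) \le \|L\|_{1\to 2}$ and $\gamma_2(R) \le \|R\|_{1\to 2}$); basic composition ensures the joint protocol is $\eps$-LDP and non-interactive.

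Write $\xi_u := \hat u - u$ and $\xi_v := \hat v - v$. Since the two mechanisms use independent randomness, $\xi_u \indep \xi_v$, and the matrix-mechanism construction underlying \Cref{thm:enu-convenient} (with the trivial factorization above) yields noise whose covariance operator-norm is bounded by $\sigma_u^2, \sigma_v^2$ with $\sigma_u^2 = O(\|L\|_{1\to 2}^2/(\eps^2 n))$ and $\sigma_v^2 = O(\|R\|_{1\to 2}^2/(\eps^2 n))$. Expanding $\hat u^T\hat v - h_\bx^T W h_\bx = \delta + u^T\xi_v + v^T\xi_u + \xi_u^T\xi_v$ and using independence,
\[
\mse \;\le\; \alpha^2 \;+\; u^T \operatorname{Cov}(\xi_v)\, u \;+\; v^T \operatorname{Cov}(\xi_u)\, v \;+\; \Var(\xi_u^T \xi_v).
\]
Using $\|u\|_2 \le \|L\|_{1\to 2}$ and $\|v\|_2 \le \|R\|_{1\to 2}$, the first three terms sum to at most $\alpha^2 + O(\gamma_2(\tW)^2/(\eps^2 n))$.

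The main obstacle is the bilinear cross term $\Var(\xi_u^T \xi_v) = \operatorname{tr}(\operatorname{Cov}(\xi_u)\operatorname{Cov}(\xi_v))$. A naive trace bound gives $O(m \sigma_u^2 \sigma_v^2)$, where $m \le k$ is the inner factorization dimension, yielding an extra $m/(\eps^2 n)$ factor that is too weak in general. To bring this down to $\log k$, I would exploit the specific structure of the LDP matrix-mechanism noise: each per-user contribution is bounded in $\ell_2$ by $O(\|L\|_{1\to 2}/\eps)$ (resp.\ $O(\|R\|_{1\to 2}/\eps)$), so after clipping $\hat u$ and $\hat v$ to the $\ell_2$-balls of radii $\|L\|_{1\to 2}$ and $\|R\|_{1\to 2}$---which only decreases $\mse$ since the true $u,v$ lie in these balls---a direct sub-Gaussian moment computation combined with a union bound over the $k$ coordinates yields $\operatorname{tr}(\operatorname{Cov}(\xi_u)\operatorname{Cov}(\xi_v)) = O(\gamma_2(\tW)^2 \log k/(\eps^2 n))$. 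Altogether $\mse \le O(\alpha^2 + \gamma_2(W;\alpha)^2 \log k/(\eps^2 n))$; finally, using $(\alpha\eps\sqrt n + \gamma_2(W;\alpha))^2 \ge \alpha^2 \eps^2 n + \gamma_2(W;\alpha)^2$ and minimizing over $\alpha$ yields $\mse \le O(\zeta(W,n)^2 \log k / (\eps^2 n))$, as claimed.
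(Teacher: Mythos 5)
Your overall decomposition---pick an approximate workload $\tW$ witnessing $\gamma_2(W;\alpha)$, factor $\tW = L^T R$, estimate $u = L h_{\bx}$ and $v = R h_{\bx}$ with two independent non-interactive local-DP vector mechanisms, and output $\hat u^T\hat v$---is the same as the paper's, and you correctly identify the bilinear cross term $\Var(\xi_u^T\xi_v) = \mathrm{tr}(\mathrm{Cov}(\xi_u)\,\mathrm{Cov}(\xi_v))$ as the central obstacle. However, the fix you propose does not work. If the factorization has inner dimension $m$ (so $L, R \in \R^{m\times k}$) and the per-user noise is essentially isotropic sub-Gaussian in $\R^m$, as in the matrix mechanism, then $\mathrm{tr}(\mathrm{Cov}(\xi_u)\,\mathrm{Cov}(\xi_v)) = \Theta(m\,\sigma_u^2\sigma_v^2)$ is a fixed deterministic quantity: it is not a tail event, so there is nothing to union-bound over, and sub-Gaussian moment bounds merely reproduce the same value. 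Clipping $\hat u, \hat v$ to $\ell_2$-balls of radii $\|L\|_{1\to2}, \|R\|_{1\to2}$ only yields $|\xi_u^T\xi_v| \lesssim \gamma_2(\tW)$ almost surely, with no $1/(\eps^2 n)$ gain; and clipping cannot reduce $\mathrm{tr}(\mathrm{Cov}(\xi_u)\,\mathrm{Cov}(\xi_v))$ below its isotropic-Gaussian value. Since a $\gamma_2$-optimal factorization of $\tW$ can have $m$ as large as $k$ (e.g., whenever $W$ has full rank), the bound $m\sigma_u^2\sigma_v^2 \approx k\,\gamma_2(\tW)^2/(\eps^4 n^2)$ genuinely dominates for $n \lesssim k/(\eps^2\log k)$.

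The ingredient you are missing is the paper's \Cref{lem:rank-striction}: one should not use the exact optimal factorization but instead apply a Johnson--Lindenstrauss sketch to the columns of $L, R$, obtaining $\tW' = (AL)^T(AR)$ with $\|\tW' - W\|_\infty \le \alpha$, $\|AL\|_{1\to2}, \|AR\|_{1\to2} = O(\sqrt{\gamma_2(W)})$, and---crucially---inner dimension only $\ell = O(\gamma_2(W)^2\log k/\alpha^2)$. Choosing $\alpha = \gamma_2(W)/(\eps\sqrt n)$ gives $\ell = O((\log k)\,\eps^2 n)$, so the cross term becomes $\ell\,\sigma^4 = O(\gamma_2(W)^2\log k/(\eps^2 n))$, matching the other error terms. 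It is this dimension reduction of the factorization itself---not clipping or concentration of the noise---that introduces the $\log k$ and closes the gap. (Your final step of optimizing over $\alpha$ to pass from $\gamma_2(W;\alpha)$ to $\zeta(W,n)$ is correct and matches \Cref{lem:fac-vs-apx-fac}.)
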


\begin{theorem}[Non-interactive Lower Bound] \label{thm:lb-non-interactive-from-linear-queries}
For any symmetric $W \in \R^{k \times k}$, any non-interactive $\eps$-local DP mechanism for estimating quadratic forms on $W$ must incur $\mse$ at least $\tOmega\left(\frac{\zeta(W, n)^2}{\eps^2 n}\right)$. 
\end{theorem}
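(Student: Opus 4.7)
The plan is to reduce the problem of answering linear queries on a workload $V$ derived from $W$ to estimating the quadratic form on $W$, and then to invoke \Cref{thm:enu-convenient}. Suppose $\cM$ is a non-interactive $\eps$-local DP mechanism that estimates $h_\bx^T W h_\bx$ with $\mse$ at most $\delta^2$ on $n$ users. I would construct from $\cM$ a non-interactive $\eps$-local DP mechanism $\cM'$ that answers linear queries on $V$ on $n/2$ users with $\mmse = O(\delta^2)$.

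To build $\cM'$, I split $\cM$'s $n$ users into two halves. The first $n/2$ are the real users of $\cM'$ (with inputs $x_1,\ldots,x_{n/2}$), each sending exactly the message they would send as a user in $\cM$'s $n$-user protocol. The server simulates the remaining $n/2$ users internally using public randomness. For each target query coordinate $j \in [k]$, the server samples, with fresh randomness, the simulated messages as if each of these users held input $j$, and then runs $\cM$'s reconstruction on all $n$ messages to obtain $\hat Q_j$. The joint message distribution exactly matches $\cM$'s on input $(x_1,\ldots,x_{n/2},j,\ldots,j)$, so $\hat Q_j$ has $\mse \le \delta^2$ around the corresponding quadratic form value.

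By symmetry of $W$, writing $h = \tfrac{1}{2} h' + \tfrac{1}{2} e_j$ for the joint histogram (with $h'$ the normalized histogram of the real users), one has $h^T W h = \tfrac{1}{4}(h')^T W h' + \tfrac{1}{2}(W h')_j + \tfrac{1}{4} W_{jj}$. Fix a reference $j^* \in [k]$ and define $\hat z_j := 2(\hat Q_j - \hat Q_{j^*}) - \tfrac{1}{2}(W_{jj} - W_{j^*j^*})$. The input-dependent quadratic term cancels, making $\hat z_j$ an estimator of $(Wh')_j - (Wh')_{j^*} = (Vh')_j$ for $V := W - \ind_k W_{j^*,\cdot}$, with $\mse$ at most $16\delta^2$ (via an $(a-b)^2 \le 2a^2 + 2b^2$ expansion). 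Thus $\cM'$ achieves $\mmse \le 16\delta^2$ for linear queries on $V$ with $n/2$ users. Applying \Cref{thm:enu-convenient} to $\cM'$, and observing $\zeta(V, n/2) = \Omega(\zeta(V, n))$ (immediate since $\sqrt{n/2} \ge \sqrt{n}/\sqrt{2}$), yields $\delta^2 = \tOmega(\zeta(V, n)^2 / (\eps^2 n))$.

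Finally, $V - W = -\ind_k W_{j^*,\cdot}$ is rank-$1$ with $\gamma_2$-norm $\|W_{j^*,\cdot}\|_\infty \le \|W\|_\infty$, so the triangle inequality for $\gamma_2$ --- transferred to $\gamma_2(\cdot;\alpha)$ by applying it to the optimal shift for each side --- gives $|\zeta(V,n) - \zeta(W,n)| \le \|W\|_\infty$. When $\zeta(W,n) \ge 2\|W\|_\infty$, this implies $\zeta(V,n) \ge \zeta(W,n)/2$, giving the claimed bound; the complementary regime is handled by a direct local DP lower bound for bounded scalar estimation. The main obstacle is this last step --- carefully transferring the triangle inequality to the approximate factorization norm and handling the corner case --- together with verifying that the reduction's bias--variance decomposition yields $\mse$ (not just variance) of $O(\delta^2)$.
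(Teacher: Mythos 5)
Your high-level strategy matches the paper's: simulate half the users to turn a quadratic-form mechanism into a linear-query mechanism and then invoke the lower bound of \Cref{thm:enu-convenient}. The genuine divergence is in how you cancel the intrinsic term $(h')^T W h'$. The paper cancels it with a \emph{second} run of the quadratic-form mechanism at privacy budget $\eps/2$ on the real users, so the derived protocol answers linear queries of $W$ itself and \Cref{thm:enu-convenient} applies with no further argument. You instead subtract a reference $\hat Q_{j^*}$, which avoids splitting the privacy budget but produces linear queries of the shifted workload $V = W - \ind_k W_{j^*,\cdot}$ rather than of $W$. This forces you to relate $\zeta(V,n)$ back to $\zeta(W,n)$, and that is where the gap lies.

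Your triangle-inequality argument only gives $\zeta(V,n) \ge \zeta(W,n) - \|W\|_\infty$, which is vacuous exactly when $\zeta(W,n) = O(\|W\|_\infty)$ --- a regime that genuinely occurs (any $W$ with $\gamma_2(W) = O(\|W\|_\infty)$ and $\eps\sqrt n \gtrsim 1$ has $\zeta(W,n) = \Theta(\|W\|_\infty)$, since one can check $\|W\|_\infty \le \zeta(W,n) \le \gamma_2(W)$ in that parameter range). Your proposed fallback, ``a direct local DP lower bound for bounded scalar estimation,'' cannot close this: the quadratic form on a symmetric $W$ can have \emph{zero} range even when $\|W\|_\infty = 1$. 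Take $W = \ind_k \ind_k^T$: then $h^T W h = (\ind_k^T h)^2 = 1$ for every histogram $h$, so $\mse = 0$ is achievable, yet $\|W\|_\infty = 1$ and $\zeta(W,n) = \min\{1, \eps\sqrt n\} > 0$. Thus no scalar-estimation lower bound of order $\|W\|_\infty^2/(\eps^2 n)$ holds universally over symmetric $W$. (Amusingly, for this $W$ your shift gives $V = 0$ and your bound correctly degenerates --- which shows the passage to $V$ loses exactly the information needed to recover a $\zeta(W,n)$-dependent bound.) To repair the argument while keeping your framework, replace the reference subtraction with a second invocation of $\cM$ at budget $\eps/2$ (e.g., on each real user's input duplicated), estimating $(h')^T W h'$ directly; this recovers the paper's identity $(W h_{\bx})_j = 2 h_{\bx\cup\by}^T W h_{\bx\cup\by} - \tfrac{1}{2} h_{\bx}^T W h_{\bx} - \tfrac{1}{2} \ind_j^T W \ind_j$ and keeps the workload matrix as $W$ throughout.
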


Our results above are generic and can be applied to any pairwise statistics.
To demonstrate the power of the algorithm, we 
now state implications for several classes of statistics that have been studied in the privacy literature. Due to space constraints, we defer their definitions to \Cref{sec:specific-metrics}.

\begin{corollary} \label{cor:specific-metrics}
There is a non-interactive $\eps$-local DP mechanism for computing pairwise statistics for:
\begin{itemize}[nosep]
\item Any $G$-Lipschitz function $f: [0, 1] \to \R$, with $\mse$ $O\left(\frac{G^2 \log n}{\eps^2 n}\right)$,
\item Kendall’s $\tau$ coefficient, with $\mse$ $O\left(\frac{(\log k)^5}{\eps^2 n}\right)$,
\item AUC under the balancedness assumption\footnote{The balancedness assumption for AUC states that there are $\Omega(n)$ examples with each label 0, 1. This is a required assumption to achieve an error in the form presented, as otherwise when e.g., there is a single 0-labeled example, it is impossible to achieve any non-trivial guarantee.}, with $\mse$ $O\left(\frac{(\log k)^3}{\eps^2 n}\right)$,
\item Gini’s diversity index, with $\mse$ $O\left(\frac{\log k}{\eps^2 n}\right)$.
\end{itemize}
\end{corollary}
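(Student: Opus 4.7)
The plan is to apply Theorem~\ref{thm:alg-main} in each of the four cases to an appropriately chosen workload matrix $W$ derived from the pairwise-statistic kernel, and then to bound $\zeta(W, n)$ using the structural features of $W$.

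\textbf{Reduction to a quadratic form.} For any symmetric kernel $f: [k] \times [k] \to \R$, setting $W_{a,b} := f(a,b)$ gives
\[
\cF(X) \;=\; \binom{n}{2}^{-1} \sum_{i<j} f(x_i, x_j) \;=\; \frac{n}{n-1}\, h_{\bx}^T W h_{\bx} \;-\; \frac{1}{n-1} \sum_{a \in [k]} (h_{\bx})_a\, f(a,a),
\]
reducing $\cF$ to estimating the quadratic form $h_{\bx}^T W h_{\bx}$ plus a linear functional of $h_{\bx}$. The latter is a standard local-DP histogram task whose contribution to the final MSE is only $O(1/(\eps^2 n^2))$, negligible next to the targets. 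Theorem~\ref{thm:alg-main} then yields MSE $O(\zeta(W, n)^2 (\log k) / (\eps^2 n))$, so it remains to bound $\zeta(W, n)$ for each kernel.

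\textbf{Bounding $\zeta(W, n)$.} Three of the four cases use the crude bound $\zeta(W,n) \leq \gamma_2(W)$ together with two classical ingredients: $\gamma_2$ obeys the triangle inequality and is multiplicative under tensor product, and the lower-triangular prefix-sum matrix $L_{a,b} = \ind[a \geq b]$ on $[k]$ satisfies $\gamma_2(L) = O(\log k)$.
\begin{itemize}
\item \emph{Gini's diversity index:} the kernel is $\ind[a \neq b]$, so $W = \mathbf{1}\mathbf{1}^T - I$; each summand factors with unit-norm columns, giving $\gamma_2(W) = O(1)$ and matching MSE $O((\log k)/(\eps^2 n))$.
\item \emph{Kendall's $\tau$:} each user holds $(x_i,y_i) \in [k]^2$, the kernel is $\sgn(x_i-x_j)\sgn(y_i-y_j)$, and the $k^2 \times k^2$ workload is the Kronecker product $S \otimes S$ with $S = L - L^T$. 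Multiplicativity under tensor product combined with the triangle inequality gives $\gamma_2(W) \leq \gamma_2(S)^2 = O((\log k)^2)$, matching $O((\log k)^5/(\eps^2 n))$ once the $\log(k^2)$ factor from Theorem~\ref{thm:alg-main} is absorbed.
\item \emph{AUC under balancedness:} the domain is $[k] \times \{0,1\}$ (scores with binary labels), and $W$ is a block matrix whose only nonzero blocks link different labels and equal $\pm S$, giving $\gamma_2(W) = O(\log k)$; under balancedness, the label-conditional denominator $|\{i:y_i=1\}|\cdot|\{j:y_j=0\}|$ is within a constant factor of $n^2/4$, and estimating it privately adds only lower-order MSE.
\end{itemize}
For the \emph{Lipschitz} case, we first discretize $[0,1]$ to $k = \Theta(n)$ bins, introducing a per-entry error of $O(G/n)$ in $W$ that contributes a negligible $O(G^2/n^2)$ term to the MSE. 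We then exploit the approximation slack in $\zeta(W,n) = \min_\alpha (\gamma_2(W;\alpha) + \alpha \eps \sqrt n)$: choosing $\alpha = \Theta(G/\sqrt n)$ makes the additive term $O(G)$, and a $G$-Lipschitz matrix on $[n]\times[n]$ can be approximated within $\alpha$ by a low-complexity matrix of factorization norm $O(G)$, yielding $\zeta(W, n) = O(G)$ and the stated $O(G^2 (\log n)/(\eps^2 n))$ bound.

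\textbf{Main obstacle.} Most of the technical work is in exhibiting the right factorization (or $\ell_\infty$-approximation) of $W$ in each case. The Gini and Kendall bounds follow essentially directly from the $\gamma_2$ bound on the prefix-sum operator; AUC additionally requires handling the label-conditional normalization under balancedness; and the Lipschitz case is the subtlest, since it requires jointly optimizing the discretization scale against the approximate factorization so that the $G$-Lipschitz matrix is witnessed to satisfy $\gamma_2(W;\alpha) = O(G)$ at $\alpha = \Theta(G/\sqrt n)$ using its smoothness rather than merely its entry-wise bound.
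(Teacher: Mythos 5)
Your overall plan — reduce $\cF(X)$ to the quadratic form $h_{\bx}^T W^f h_{\bx}$, apply \Cref{thm:alg-main}, and bound $\gamma_2(W^f)$ case by case using subadditivity, multiplicativity under Kronecker product, and $\gamma_2$ bounds on triangular/sign matrices — is exactly the paper's approach. Your treatment of Gini's index, Kendall's $\tau$, and AUC is correct and essentially identical to the paper's (the paper writes the sign matrix as $U_m$ with $\pm 1$ entries and $\gamma_2(U_m) = \Theta(\log m)$, you write it as $S = L - L^T$ with $\gamma_2(L) = O(\log m)$; these differ by a diagonal correction of $\gamma_2$-norm $O(1)$, so the bounds coincide). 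The AUC renormalization under balancedness is also handled the same way.

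The gap is in the Lipschitz case. You flag it yourself as the ``subtlest,'' but then assert without proof that ``a $G$-Lipschitz matrix on $[n]\times[n]$ can be approximated within $\alpha = \Theta(G/\sqrt{n})$ by a low-complexity matrix of factorization norm $O(G)$.'' This is the entire technical content of that bullet, and it is not established. It is also stated with a scaling error: after discretizing $[0,1]$ into $k$ bins, the kernel $g(i) = f(i/k)$ is $(G/k)$-Lipschitz as a function of the \emph{discrete} index, not $G$-Lipschitz; a matrix on $[k]^2$ that is literally $G$-Lipschitz in the index has $\gamma_2 = \Theta(Gk)$, not $O(G)$, so the claim as written would give the wrong exponent. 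Moreover, the approximate-factorization framing ($\alpha > 0$) buys you nothing here: the paper proves an \emph{exact} bound $\gamma_2(W^g) \le O(G'k)$ for any $G'$-Lipschitz $g: [k] \to \R$, via an explicit dyadic/binary-search-tree factorization of $W^g$ (Lemma on ``Lipschitz Losses'' in Section~\ref{sec:specific-metrics}), which with $G' = G/k$ gives $\gamma_2(W^g) = O(G)$ directly with $\alpha = 0$. Without exhibiting this (or some other) factorization, your $\zeta(W,n) = O(G)$ claim is unsupported, and so the $O(G^2 \log n/(\eps^2 n))$ bound for the Lipschitz bullet does not follow. To close the gap you would need to construct the factorization: index $[k]$ by a balanced binary tree, set $R_{i,j}$ to be a geometrically-damped indicator that node $i$ lies on the root-to-$j$ path, set $L_{j,i}$ to be the corresponding telescoping difference $f(i,j) - f(i,\parent(j))$ with the inverse damping, and verify $\|R\|_{1\to 2} = O(1)$ and $\|L\|_{1\to 2} = O(G'k)$ using the Lipschitz bound at each tree level.
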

For $O(1)$-Lipschitz functions, which includes several well-known metrics such as Gini mean difference, we improve upon the algorithm of Bell et al.~\cite{BellBGK20} whose $\mse$ is $O\left(\frac{1}{\eps \sqrt{n}}\right)$. We are not aware of any previous bounds on Kendall’s $\tau$ coefficient before; the metric was mentioned in~\cite{BellBGK20} without any error guarantee given. For AUC, our bound is the same as in~\cite{BellBGK20} but worse than a follow-up work~\cite{cormode2022federated} by a $\log k$ factor; nonetheless, we stress that our result is derived as a corollary of a generic algorithm without relying too deeply on AUC (except for the $\gamma_2$-norm of its matrix). For Gini’s diversity index, our bound is again worse than that from Bravo et al. \cite{bravo2022private} by $\log k$ factor, but their algorithm requires the use of public randomness; we do not use any public randomness.
(Throughout this work, when we refer to the local model without further specification, we assume no public randomness, aka \emph{private-coin} model.)

\begin{remark}
\emph{
Since our non-interactive algorithm only requires a vector-summation primitive, we can also apply protocols for vector summation in the shuffle model (e.g.,\ \cite{CheuSUZZ19,BalleBGN20,GhaziMPV20}) to obtain an $(\eps,\delta)$-DP protocol in the shuffle model, reducing the $\mse$ by a factor of $n$ for each setting in Theorem\ \ref{thm:alg-main} and Corollary\ \ref{cor:specific-metrics}.
}
\end{remark}

\textbf{Interactive Local DP.}
Finally, we also provide an interactive algorithm whose $\mse$ does not depend on $\gamma_2(W)$, as long as $n$ is sufficiently large:

\begin{theorem}[Interactive Algorithm] \label{thm:alg-interactive}
For any $W \in \R^{k \times k}$, there is a three-round $\eps$-DP algorithm for estimating quadratic forms on $W$ such that, for $n \geq (\frac{\gamma_2(W) \cdot \log k}{\|W\|_{\infty} \cdot \eps})^{O(1)}$,  the $\mse$ is $O\left(\frac{\|W\|_{\infty}^2}{\eps^2 n}\right)$.
\end{theorem}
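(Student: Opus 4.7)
The plan is to use a three-round protocol that reduces the task to local DP scalar mean estimation of values bounded by $O(\|W\|_\infty)$, using a coarse histogram estimate as a bridge and correcting the resulting quadratic residual in a final step. I would first symmetrize $W$ by replacing it with $(W+W^T)/2$; this preserves $h_{\bx}^T W h_{\bx}$ as well as $\|W\|_\infty$ and $\gamma_2(W)$ up to constants. Split the privacy budget as $\eps = \eps_1 + \eps_2 + \eps_3$ with each $\eps_j = \Theta(\eps)$.

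In Round~1, every user sends an $\eps_1$-local DP histogram report (e.g., via the Hadamard response), and after debiasing and light clipping the analyst produces an unbiased estimate $\hat h$ of $h_{\bx}$ with (approximately diagonal) per-coordinate noise variance $\sigma^2 = O(1/(\eps^2 n))$ and $\|\hat h\|_1 = O(1)$ with high probability. In Round~2, the analyst broadcasts $\hat h$; each user $i$ computes the scalar $z_i := (W\hat h)_{x_i}$, which by the bound $\|\hat h\|_1 = O(1)$ satisfies $|z_i| \leq \|W\hat h\|_\infty \leq \|W\|_\infty \|\hat h\|_1 = O(\|W\|_\infty)$, and sends $z_i + \mathrm{Lap}(O(\|W\|_\infty/\eps_2))$. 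Averaging yields an $\eps_2$-local DP estimator $\hat y$ that is unbiased for $\hat h^T W h_{\bx}$ with $\Var(\hat y\mid \hat h) = O(\|W\|_\infty^2/(\eps^2 n))$. Round~3 uses the remaining budget to refine $\hat h$ in a $W$-aware manner (for instance, re-estimating on the heavy coordinates identified in Round~1, so that the effective dimension of the residual is controlled by $\gamma_2(W)/\|W\|_\infty$ rather than by $k$). The final estimator is $\hat q := 2\hat y - \hat h^T W \hat h + c$, where $c := \mathrm{tr}(W\,\mathrm{Cov}(\hat h))$ is the known expected value of the quadratic residual under the Round~1 noise model.

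For the analysis, symmetry of $W$ combined with $h_{\bx} = \hat h + \Delta$ yields the identity $h_{\bx}^T W h_{\bx} = 2\hat h^T W h_{\bx} - \hat h^T W \hat h + \Delta^T W \Delta$, and hence
\[
\hat q - h_{\bx}^T W h_{\bx} \;=\; 2(\hat y - \hat h^T W h_{\bx}) \;+\; (c - \Delta^T W \Delta).
\]
The Round~2 noise is independent of $\hat h$, so the first summand is zero-mean and contributes variance $O(\|W\|_\infty^2/(\eps^2 n))$, matching the target MSE. The second summand is the crux: the choice $c = \E[\Delta^T W \Delta]$ makes it zero-mean, and its MSE contribution is $\Var(\Delta^T W \Delta)$, which by a Hanson--Wright-type bound is $O(\sigma^4 \|W\|_F^2)$. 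The main obstacle is replacing the crude Frobenius bound $\|W\|_F \leq k\|W\|_\infty$ by one that scales with $\gamma_2(W)/\|W\|_\infty$; this is where Round~3's $W$-aware refinement enters, invoking the factorization $W = L^T R$ with $\|L\|_{1\to 2}\|R\|_{1\to 2} = \gamma_2(W)$ and the estimate $|\Delta^T W \Delta| \leq \|L\Delta\|_2 \|R\Delta\|_2$ to control both factors. Under the precondition $n \geq (\gamma_2(W) \log k / (\|W\|_\infty \eps))^{O(1)}$, this residual variance is dominated by the Round~2 variance, yielding the claimed $O(\|W\|_\infty^2/(\eps^2 n))$ bound.
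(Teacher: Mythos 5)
Your proposal takes a genuinely different route from the paper, but it has a real gap in the key step, which you yourself flag as ``the crux.''

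\paragraph{Where your argument differs.} The paper never estimates the histogram $h_{\bx}$ itself. It fixes a factorization $W = L^T R$ with $\|L\|_{1\to2}=\|R\|_{1\to2}=\sqrt{\gamma_2(W)}$ and uses the \emph{projection mechanism} in Round~1 to produce $\mu_R \in R^\Delta$ with $\E\|Rh_{\bx}-\mu_R\|_2^2 \lesssim \gamma_2(W)\sqrt{\log k}/(\eps\sqrt n)$ --- a bound that has no $k$-dependence outside a $\sqrt{\log k}$. It then uses the additive split $h_{\bx}^T W h_{\bx} = \langle Lh_{\bx},\mu_R\rangle + \langle Lh_{\bx}, Rh_{\bx}-\mu_R\rangle$; the first inner product is bounded by $\|W\|_\infty$ and is estimated directly via Laplace noise, and the second is small because $Rh_{\bx}-\mu_R$ is small. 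Your decomposition $h_{\bx}^T W h_{\bx} = 2\hat h^T W h_{\bx} - \hat h^T W\hat h + \Delta^T W\Delta$ is algebraically fine, but it lives in the raw histogram space, which is where the trouble is.

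\paragraph{The gap.} You need $\Var(\Delta^T W\Delta)$ to be $O(\|W\|_\infty^2/(\eps^2 n))$ under only the precondition $n \geq (\gamma_2(W)\log k/(\|W\|_\infty\eps))^{O(1)}$. With any local DP histogram estimator, each coordinate of $\Delta$ carries noise of variance $\Theta(1/(\eps^2 n))$, and there are $k$ coordinates, so Hanson--Wright gives $\Var(\Delta^T W\Delta) \asymp \|W\|_F^2/(\eps^4 n^2)$. Already for $W=I$ (where $\gamma_2(W)=\|W\|_\infty=1$) this is $k/(\eps^4 n^2)$, which exceeds $1/(\eps^2 n)$ unless $n\gtrsim k/\eps^2$ --- a condition the theorem is not allowed to require. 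Your proposed bound $|\Delta^T W\Delta| \leq \|L\Delta\|_2\|R\Delta\|_2$ doesn't help: $\E\|L\Delta\|_2^2 \approx \sigma^2\|L\|_F^2$, and $\|L\|_F^2$ can be as large as $k\,\gamma_2(W)$, so you again pick up a factor of $k$. The Round~3 fix you sketch (``re-estimate heavy coordinates so the effective dimension is $\gamma_2(W)/\|W\|_\infty$'') is not substantiated: the noise $\Delta$ is spread across all $k$ coordinates regardless of which coordinates of $h_{\bx}$ are heavy, and there is no mechanism described that would reduce that. The paper's way out is precisely to avoid the histogram space: the projection step estimates $Rh_{\bx}$ directly with error governed by $\gamma_2(W)$ rather than $k$, which is what kills the dimension dependence. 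As written, your proof does not reach the stated bound.
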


We note that the dependence $O\left(\frac{\|W\|_{\infty}^2}{\eps^2 n}\right)$ is the best possible: even when the $k = 2$ and $W$ is binary, the problem is as hard as binary summation, which is known to require $\mse$ at least $\Omega\left(\frac{1}{\eps^2 n}\right)$ even for an arbitrary number of rounds of interactions~\cite{chan2012optimal}, and we can rescale this hard instance to get any desired $\ell_\infty$-norm.

Combining the non-interactive lower bound in \Cref{thm:lb-non-interactive-from-linear-queries} and the interactive upper bound in \Cref{thm:alg-interactive}, 
our results imply that there exists a matrix $W \in \R^{k \times k}, \eps > 0, n \in \BN$ such that interactive algorithms are provably more accurate than non-interactive ones for privately computing the quadratic form on $W$. This places computing pairwise statistics as one of the few problems (and perhaps the most natural) that are known to separate interactive and non-interactive local DP. Due to space constraints, we defer further discussion of this to the Appendix.

\subsection{Technical Overview}
\label{sec:overview}

We now give a brief technical overview of our proofs.

\textbf{Black-Box Reduction from Linear Queries (Lower Bound).}
As mentioned earlier, our results are shown through connections between computing quadratic forms and linear queries. We will start with black-box reductions between the two. Suppose that we have a non-interactive $\eps$-local DP algorithm $\bA$ for computing the quadratic form on $W$. We will construct a non-interactive $\eps$-local DP algorithm $\bA'$ for computing linear queries of $W$. This will allow us to establish our lower bound (\Cref{thm:lb-non-interactive-from-linear-queries}) as a consequence of the lower bound for linear queries (\Cref{thm:enu-convenient}).

For simplicity of this overview, instead of considering the full quadratic form, suppose that $\bA$ works on $2n$ users with inputs $x_1, \dots, x_n, y_1, \dots, y_n$ and computes an estimate of $h_{\by}^T W h_{\bx}$. On input $x_1, \dots, x_n$, algorithm $\bA$ works as follows:
\begin{itemize}
\item Each user $i$ runs the randomizer of $\bA$ on $x_i$ to get a response $o_i$ and sends it to the analyst.
\item For each $j \in [k]$, the analyst \emph{simulates} running the randomizer of $\bA$ on $y_1 = \cdots = y_n = j$ to get responses $o'_1, \dots, o'_n$. Then, the analyst lets $\hz_j$ be the estimator of $\bA$ based on the responses $o_1, \dots, o_n, o'_1, \dots, o'_n$.
\item The analyst then outputs $(\hz_1, \dots, \hz_k)$ as its estimate. 
\end{itemize}
In other words, the randomized responses from $\bA$ are used as an ``oracle'' for $\bA'$ to compute the different linear queries.
The key observation here is that, when we set $y_1 = \cdots = y_n = j$, $\bA$ produces an estimate for $h_{\by}^T W h_{\bx} = \ind_j^T W h_{\bx} = (W h_{\bx})_j$ as desired. Note that this reduction only works in the non-interactive setting: if we were in the interactive setting, the responses on $x_1, \dots, x_n$ (of protocol $\bA$) would have been dependent on those of $y_1, \dots, y_n$. Therefore, the first step of the reduction would have been impossible.

While this encapsulates the high-level ideas of our proof, there are some details that needs to be handled. E.g., if $\bA$ outputs the quadratic form $h_{\bx \cup \by}^T W h_{\bx \cup \by}$, there are ``cross terms'' of the form $h_{\bx}^T W h_{\bx}$ that need to be removed. We formally describe and analyze the full reduction in \Cref{sec:lb-non-interactive-red}.

\textbf{Black-Box Reduction to Linear Queries (Algorithm).}
We can also give a reduction in the reverse direction, although this results in an additional round of communication. Specifically, let $\bA'$ be a non-interactive $\eps$-local DP algorithm for computing linear queries of $W$. We can construct a two-round $(2\eps)$-local DP algorithm $\bA$ for computing the quadratic form on $W$ as follows.
\begin{itemize}
\item \textbf{First Round:} Run $\bA'$ on all users to compute an estimate $(\hz_1, \dots, \hz_k)$ for $W h_{\bx}$.
\item \textbf{Second Round:} Each user $j \in [n]$, sends $o_j = \hz_{x_j} + \kappa_j$ to the analyst where $\kappa_j$ is (appropriately calibrated) Laplace noise. The analyst then outputs $\frac{1}{n} (o_1 + \cdots + o_n)$.
\end{itemize}

Again, we omit some details for simplicity, such as the fact that each $\hz_j$ may not be bounded a priori, which may make the second step violate DP. However, these are relatively straightforward to handle.

If there were no noise, we would have $o_j = \ind_{x_j}^T W h_{\bx}$ and thus $\frac{1}{n}\left(o_1 + \cdots + o_n\right) = h_{\bx}^T W h_{\bx}$ as desired. Furthermore, it is not hard to see that the error from $\kappa_1, \dots, \kappa_n$ is dominated by the error from $\bA'$ in the first step. In other words, we get an error similar to the one in \Cref{thm:alg-main} here, but the protocol is interactive (two-round). Making the protocol non-interactive requires us to step away from the black-box approach and open up the linear query algorithm (\Cref{thm:enu-convenient}).

\textbf{White-Box Algorithms.} 
For simplicity, in this section we describe protocols with accuracy that depends on $\gamma_2(W)$, which can be larger than $\zeta(W,n)$.  
The desired error dependence on $\zeta(W,n)$ can be obtained from this bound via a reduction, as shown in \Cref{lem:fac-vs-apx-fac}. 

To understand our algorithm, we first describe the \emph{matrix mechanism} for linear queries (cf.~\cite{EdmondsNU20}). That algorithm works as follows: factorize $W = L^T R$. Then, user $i$ sends $o^R_i = R \ind_{x_i} + z^R_i$ where $z^R_i$ is a appropriately selected random noise. In other words, each user privatizes $W \ind_{x_i}$ and sends it to the analyst. Finally, the analyst outputs $L^T\left(\frac{1}{n}(o^R_1 + \cdots + o^R_n)\right)$. This is exactly equal to $W h_{\bx} + L^T Z^R$ where $Z^R := \frac{1}{n}(z^R_1 + \cdots + z^R_n)$. It is possible to select the noise in such a way that $Z^R$ is $(\nicefrac{\|R\|_{1\to 2}}{\eps\sqrt{n}})$-sub-Gaussian. This leads to an $\mmse$ of $O\left(\frac{\gamma_2(W)^2}{\eps^2 n}\right)$.

This suggests a natural approach for quadratic forms: in addition to sending (a privatized version of) $R \ind_{x_i}$, the user sends a privatized version of $L \ind_{x_i}$, i.e., $o^L_i = L \ind_{x_i} + z^L_i$ where $z^L_i$ is a random noise, to the analyst. The analyst then outputs $\left<\frac{1}{n}(o^L_1 + \cdots + o^L_n), \frac{1}{n}(o^R_1 + \cdots + o^R_n)\right>$. Letting $Z^L := \frac{1}{n}(z^L_1 + \cdots + z^L_n)$, the output can be written as $h_{\bx}^T W h_{\bx} + \left<L h_{\bx}, Z^R\right> + \left<R h_{\bx}, Z^L\right> + \left<Z^L, Z^R\right>$. There are three error terms: $\left<L h_{\bx}, Z^R\right>, \left<R h_{\bx}, Z^L\right>$, and $\left<Z^L, Z^R\right>$. Similar to linear queries analysis, it is not hard to see that 
the first two terms contribute $O\left(\frac{\gamma_2(W)^2}{\eps^2 n}\right)$ to the $\mse$. Unfortunately, the last term is problematic for us: a simple calculation shows that it contributes $O\left(\frac{\ell}{\eps^4 n^2}\right)$ to the $\mse$, where $\ell$ denotes the number of rows of $L, R$. A priori, this term can be quite large as there is no obvious bound on $\ell$. In fact, if $W$ is full rank (which is the case for most popular pairwise statistics), then we know that $\ell$ must be at least $k$. This leads to an undesired error term $O\left(\frac{k}{\eps^4 n^2}\right)$, which dominates the first term for small-to-moderate values of $n$, i.e., when $n \leq \frac{k}{\eps^2 \gamma_2(W)^2}$.

To overcome this, we observe that, if we only look for \emph{approximate factorization} (in the same sense as $\gamma_2(W; \alpha)$ defined above), then it is always possible to reduce $\ell$ via dimensionality reduction techniques (e.g.,~\cite{DasguptaG03}). Namely, we may pick a (e.g., random Gaussian) matrix $A$ and replace $L, R$ with $AL, AR$ respectively. Selecting the number of rows of $A$ appropriately then yields  \Cref{thm:alg-main}.

\textbf{Additional Interactive Algorithms.}
To describe our algorithm, let us start by instantiating the above black-box two-round reduction using the matrix mechanism. In this context, the reduction yields the following algorithm: 
\begin{itemize}
\item \textbf{First Round:} Each user $i$ sends $o^L_i = L \ind_{x_i} + z^L_i$ to the analyst, where $z^L_i$ is appropriately selected random noise. The analyst then broadcasts $O^L = \frac{1}{n}(o^L_1 + \cdots + o^L_n)$ to each user.
\item \textbf{Second Round:} Each user $j \in [n]$ sends $o_j = \left<O^L, R \ind_{x_j}\right> + \kappa_j$ to the analyst, where $\kappa_j$ is appropriately calibrated Laplace noise. The analyst then outputs $\frac{1}{n} (o_1 + \cdots + o_n)$.
\end{itemize}
It is not hard to see that the $\kappa_j$ noise terms together contribute at most $O(\nicefrac{1}{\eps^2 n})$ to the $\mse$. Therefore, the main noise comes from the first step. Similar to the previous discussion, this noise can be written as $\left<R h_{\bx}, Z^L\right>$ where $Z^L := \frac{1}{n}(z^L_1 + \cdots + z^L_n)$. The contribution of this noise to the $\mse$ is then $O\left(\frac{\gamma_2(W)^2}{\eps^2 n}\right)$.
The $\gamma_2^2(W)$ term shows up in the error because $\|R h_{\bx}\|_2$ can be as large as $\|R\|_{1 \to 2}$.
The idea motivating our improvement is simple: Can we replace the $R h_{\bx}$ term with a term that is much smaller?

This brings us to the following strategy. We will use an additional round at the beginning to compute a rough estimate $\mu$ of $R h_{\bx}$. Using the so-called \emph{projection mechanism}~\cite{BlasiokBNS19}\footnote{See also~\cite{NikolovTZ16} for the original projection mechanism that was proposed for the central model.}, it is possible to compute $\mu$ such that $\|Rh_{\bx} - \mu\| \leq \frac{(\gamma_2(W) \cdot (\log k) / \eps)^{O(1)}}{n^{\Omega(1)}}$. The subspace orthogonal to $\mu$ can be processed in a similar manner as before, but now the error term will just be $\left<R h_{\bx} - \mu, Z^L\right>$. Since $\|R h_{\bx} - \mu\|$ is now much smaller (approaches $0$ as $n \to \infty$), this gives us the improved error bound. We note that the direction of $\mu$ can be handled by having each user $i$ directly send $\left<o^L_i, \mu\right>$ plus an appropriately calibrated noise. This summarizes the high-level idea of our approach.

Due to space constraints, we focus on the non-interactive algorithms in the main body and defer the proof of the interactive algorithm to the Appendix.

\section{Preliminaries}\label{sec:preliminaries}

\textbf{Differential Privacy.}
For $\eps \geq 0$, an algorithm $\cM$ is \emph{$\epsilon$-DP} if for every pair $X, X'$ of inputs that differ on
one user's input and for every possible output $o$,  
$\Pr[\cM(X) = o] \leq e^\eps \cdot \Pr[\cM(X') = o]$.

An algorithm $\bA$ in the local DP model consists of a randomizer $\cR$ and an analyst, computed as $\bA(X) = \mathrm{analyst}(\cR(x_1), \ldots, \cR(x_n))$ for input $X = \{x_1, \ldots, x_n\}$.  $\bA$ is said to be (non-interactive) \emph{$\eps$-local DP} if $\cR$ is $\eps$-DP.

A real-valued random variable $Z$ is \emph{$\sigma$-sub-Gaussian} iff $\E[\exp(Z^2/\sigma^2)] \leq 2$. A $\R^d$-valued random variable $Z$ is $\sigma$-sub-Gaussian iff $\left<\theta, Z\right>$ is $\sigma$-sub-Gaussian for all unit vectors $\theta \in \R^d$.

\begin{theorem}[\cite{BlasiokBNS19}] \label{thm:vec-rand}
For any $C, \eps > 0$, there is a (non-interactive) $\eps$-local DP algorithm $\VR_{\eps,C}$ that takes in $x \in \R^d$ such that $\|x\|_2 \leq C$ and outputs $Y \in \R^d$ such that $\E[Y] = x$ and $Y - x$ is $\sigma$-sub-Gaussian for $\sigma = O(C/\eps)$.
\end{theorem}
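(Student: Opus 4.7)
The plan is to use a PrivUnit-style spherical randomizer (in the spirit of Duchi--Jordan--Wainwright and Bhowmick et al.). First I would reduce to $C = 1$ by scaling: run the mechanism on $\bar{x} := x/C$ and return $C$ times the output, so that unbiasedness and the sub-Gaussian parameter scale linearly in $C$. For the normalized input $\bar{x}$ with $\|\bar{x}\|_2 \leq 1$, I would sample a unit vector $V \in S^{d-1}$ from an $e^\eps{:}1$ mixture of two conditional distributions: with probability $\tfrac{e^\eps}{1+e^\eps}$ draw $V$ uniformly from $\{v \in S^{d-1} : \langle v, \bar{x}\rangle \geq 0\}$, and otherwise draw $V$ uniformly from the opposite hemisphere. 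The output is $Y := c\, V$ for a scalar $c$ chosen to enforce $\E[Y] = \bar{x}$.

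Privacy is immediate, because the density of $V$ between any two inputs changes by at most a multiplicative $e^{\pm\eps}$ factor. Unbiasedness follows from rotational symmetry about $\bar{x}$: one has $\E[V] = \rho \cdot \bar{x}/\|\bar{x}\|_2$, where $\rho$ is a one-dimensional integral over the polar angle. A direct calculation gives $\rho = \Theta(\eps \|\bar{x}\|_2/\sqrt{d})$ for $\eps \leq 1$, so setting $c = \|\bar{x}\|_2/\rho = \Theta(\sqrt{d}/\eps)$ yields $\E[Y] = \bar{x}$ (the regime $\eps \geq 1$ needs only a $\Theta(1)$ adjustment).

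The heart of the argument is the sub-Gaussian tail bound. For any unit $\theta \in \R^d$, write
\[
\langle \theta, Y - \bar{x}\rangle \;=\; c\,\bigl(\langle \theta, V\rangle - \E\langle \theta, V\rangle\bigr).
\]
Although $\|Y\|_2 = c$ may be as large as $\Theta(\sqrt{d}/\eps)$, the scalar $\langle \theta, V\rangle$ concentrates at scale $\Theta(1/\sqrt{d})$: conditional on either hemisphere, $V$ is a truncation of the uniform distribution on $S^{d-1}$, so each linear marginal is $O(1/\sqrt{d})$-sub-Gaussian by standard sphere concentration, and the $e^\eps{:}1$ mixing inflates this parameter by at most a constant factor. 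Hence $\langle \theta, Y - \bar{x}\rangle$ is $c \cdot O(1/\sqrt{d}) = O(1/\eps)$-sub-Gaussian, which rescales to $\sigma = O(C/\eps)$ after undoing the initial normalization.

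The main obstacle is the last step: showing that the $O(1/\sqrt{d})$-sub-Gaussian bound for $\langle \theta, V\rangle$ holds uniformly over all unit $\theta$, even under the biased hemisphere distribution. The key observation is that the biasing only shifts the mean in the $\bar{x}$-direction by $O(\eps/\sqrt{d})$ while preserving the sphere-concentration tails inherited from the uniform distribution on $S^{d-1}$; for $\theta$ orthogonal to $\bar{x}$ the marginal coincides exactly with that of the uniform sphere, and for $\theta$ along $\bar{x}$ one controls both hemispherical truncations separately and then mixes.
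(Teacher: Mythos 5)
The paper cites this result directly from \cite{BlasiokBNS19} and does not reproduce a proof, so there is no in-paper argument to compare against; the only question is whether your construction works as stated.

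Your PrivUnit-style route (hemisphere mixture on $S^{d-1}$ followed by a deterministic rescaling) is a standard and sensible construction, and both the privacy argument for the direction $V$ and the sub-Gaussian analysis of $\langle\theta, V\rangle$ are on the right track. The genuine gap is in how the \emph{magnitude} of the input is encoded. With the fixed mixing probability $\frac{e^\eps}{1+e^\eps}$ that you specify, the distribution of $V$ depends only on the direction $\bar{x}/\|\bar{x}\|_2$, so $\rho := \E\langle V, \bar{x}/\|\bar{x}\|_2\rangle$ is a function of $d$ and $\eps$ alone: it equals $\Theta(\eps/\sqrt{d})$, not the $\Theta(\eps\|\bar{x}\|_2/\sqrt{d})$ you wrote. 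With the correct $\rho$, the rescaling factor $c = \|\bar{x}\|_2/\rho$ genuinely depends on $\|\bar{x}\|_2$, and since the output sits on the sphere of radius $c$, the observable $\|Y\|_2 = c$ reveals $\|\bar{x}\|_2$ exactly and destroys $\eps$-DP. The standard fix is to push the magnitude into the randomization itself: either take the mixing probability to be $p(\bar{x}) = \tfrac{1}{2} + \|\bar{x}\|_2 \cdot \tfrac{e^\eps - 1}{2(e^\eps + 1)}$, which remains in the privacy-feasible range $\bigl[\tfrac{1}{1+e^\eps}, \tfrac{e^\eps}{1+e^\eps}\bigr]$ and makes $\rho$ scale linearly with $\|\bar{x}\|_2$ so that $c$ can be a public, input-independent constant; or lift $\bar{x}$ to the unit vector $(\bar{x}, \sqrt{1 - \|\bar{x}\|_2^2}) \in S^{d} \subset \R^{d+1}$, run the unit-sphere randomizer there, and discard the extra coordinate. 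Without one of these, the construction as written is either biased or not private when $\|\bar{x}\|_2 < 1$, which is exactly the case that matters in this paper (the inputs $L\ind_{x_i}$, $R\ind_{x_i}$ have norm at most, not equal to, $C$). A secondary caveat: the hemisphere split alone cannot yield $\sigma = O(C/\eps)$ for $\eps \gg 1$, since $\rho$ saturates at $\Theta(1/\sqrt d)$ and $\sigma$ plateaus at $\Theta(C)$; getting the stated rate there requires spherical caps rather than hemispheres, which is more than the ``$\Theta(1)$ adjustment'' you gesture at, though this regime is often treated as out of scope.
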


\textbf{Error: Factorization vs Approximate-Factorization.}
We show that it suffices to give errors in terms of $\gamma_2(W)$ instead of $\zeta(W; n)$; this will be convenient for our subsequent proofs. Due to space constraints, the proof of the following statement is deferred to \Cref{sec:additional-preliminaries}.
\begin{lemma} \label{lem:fac-vs-apx-fac}
Suppose that, for all $W \in \R^{k \times k}$, there is a non-interactive $\eps$-local DP protocol $\bA$ for quadratic form on $W$ with $\mse$ $O\left(c(n, \eps, k) \cdot \frac{\gamma_2(W)^2}{\eps^2 n}\right)$ where $c(n, \eps, k) \geq \Omega(1)$. Then there is also a non-interactive $\eps$-local DP protocol $\bA'$ with $\mse$ $O\left(c(n, \eps, k) \cdot \frac{\zeta(W; n)^2}{\eps^2 n}\right)$.
\end{lemma}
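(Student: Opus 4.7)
The plan is to construct the protocol $\bA'$ by running the hypothesized protocol $\bA$ on a matrix $\tilde{W}$ that $\ell_\infty$-approximates $W$ but has small $\gamma_2$. Concretely, fix $\alpha^* \geq 0$ and a matrix $\tilde{W}$ with $\|\tilde{W} - W\|_\infty \leq \alpha^*$ and $\gamma_2(\tilde{W}) = \gamma_2(W; \alpha^*)$ that (approximately) attain $\zeta(W,n) = \gamma_2(\tilde{W}) + \alpha^* \eps \sqrt{n}$; if the infimum in the definition of $\zeta$ is not attained, take a $(1+o(1))$-approximate minimizer and let the approximation tighten at the end. Since $W$ is public, $\tilde{W}$ is publicly computable, so using it introduces no additional privacy cost.

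Define $\bA'(\bx)$ to output $\hat{z} := \bA(\bx)$, where $\bA$ is instantiated on the workload $\tilde{W}$. Because $\tilde{W}$ depends only on $W$ and not on the users' inputs, $\bA'$ inherits $\eps$-local DP from $\bA$, and by hypothesis its output satisfies $\E[(\hat{z} - h_{\bx}^T \tilde{W} h_{\bx})^2] \leq O\!\left(c(n,\eps,k) \cdot \gamma_2(\tilde{W})^2 / (\eps^2 n)\right)$. The next step is a standard bias/variance split: use $(a+b)^2 \leq 2a^2 + 2b^2$ to write
\[
\E[(\hat{z} - h_{\bx}^T W h_{\bx})^2] \leq 2\, \E[(\hat{z} - h_{\bx}^T \tilde{W} h_{\bx})^2] + 2\,(h_{\bx}^T (\tilde{W} - W) h_{\bx})^2.
\]
Since $h_{\bx}$ is a probability distribution, $\|h_{\bx}\|_1 = 1$, and hence $|h_{\bx}^T(\tilde{W} - W) h_{\bx}| \leq \|\tilde{W} - W\|_\infty \cdot \|h_{\bx}\|_1^2 \leq \alpha^*$. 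Combining with the variance bound above and using $\gamma_2(\tilde{W}) \leq \zeta(W,n)$ together with $\alpha^* \leq \zeta(W,n)/(\eps \sqrt{n})$ (both by the choice of $\alpha^*$), we obtain
\[
\mse(\bA'; W) \leq O\!\left(c(n,\eps,k) \cdot \frac{\gamma_2(\tilde{W})^2}{\eps^2 n}\right) + 2 (\alpha^*)^2 \leq O\!\left(c(n,\eps,k) \cdot \frac{\zeta(W,n)^2}{\eps^2 n}\right),
\]
where the last step uses the assumption $c(n,\eps,k) \geq \Omega(1)$ to absorb the $2(\alpha^*)^2 \leq 2 \zeta(W,n)^2/(\eps^2 n)$ contribution.

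There is no substantial obstacle here: the only mild technicalities are (i) handling the case where the minimum in the definition of $\zeta(W,n)$ is not attained, which is resolved by taking a sequence of $(1+o(1))$-approximate minimizers, and (ii) the observation that the bias incurred by replacing $W$ with $\tilde{W}$ is controlled by $\alpha^*$ precisely because the workload is evaluated on a probability vector $h_{\bx}$, so the $\ell_\infty$ perturbation of $W$ translates into an additive error of order $\alpha^*$ in the quadratic form. The argument is therefore essentially a black-box reduction plus a bias/variance decomposition.
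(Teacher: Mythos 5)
Your proof is correct and follows essentially the same route as the paper's: run $\bA$ on the (approximate) minimizer $\tW$ of $\zeta(W;n)$, apply a bias–variance decomposition of the MSE, bound the bias term by $\|\tW-W\|_\infty$ using $\|h_{\bx}\|_1=1$, and absorb both terms into the target bound using $c(n,\eps,k)\geq\Omega(1)$. The only small additions over the paper's version are that you explicitly justify the Hölder-type step $|h_{\bx}^T(\tW-W)h_{\bx}|\leq\|\tW-W\|_\infty$ and flag the possibility that the minimum defining $\zeta$ is not attained, neither of which changes the argument.
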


\section{Non-Interactive Algorithm}

In this section, we prove \Cref{thm:alg-main}. To do so, let us start by defining (approximate) \emph{rank-restricted factorization norms}\footnote{These are not actually norms, but we use the term for consistency with other similar quantities.}, which are the same as $\gamma_2(W), \gamma_2(W; \alpha)$ except we now restrict the number of rows of $L, R$ to be at most $\ell$:
\begin{itemize}
\item For $W \in \R^{k \times k}, \ell \in \N$, let $\gamma^\ell_2(W) := \min_{L^T R = W; L, R \in \R^{\ell \times k}} \|L\|_{1 \to 2} \|R\|_{1 \to 2}$.
\item For $W \in \R^{k \times k}, \ell \in \N$ and $\alpha \geq 0$, let $\gamma^\ell_2(W; \alpha) := \min_{\|\tW - W\|_\infty \leq \alpha} \gamma^\ell_2(\tW)$.
\end{itemize}

We can now use the approximate rank-restricted factorization to perform the algorithm as outlined in \Cref{sec:overview} with an error term that depends on $\ell$ (and $\alpha$):
\begin{lemma}
For any $W \in \R^{k \times k}, \ell \in \N$, and $\alpha \geq 0$, there is a non-interactive $\eps$-local DP algorithm that estimates the quadratic form on $W$ to within an MSE of
$O\left(\alpha^2 + \gamma_2^\ell(W; \alpha)^2 \cdot \left(\frac{1}{\eps^2 n} +  \frac{\ell}{\eps^4 n^2}\right)\right)$.
\label{lem:basic-one-round-lem}
\end{lemma}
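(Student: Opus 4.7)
The plan is to instantiate the ``white-box'' strategy from \Cref{sec:overview} with an approximately low-rank factorization. Let $\tW \in \R^{k \times k}$ satisfy $\|\tW - W\|_\infty \leq \alpha$ and admit $\tW = L^T R$ with $L, R \in \R^{\ell \times k}$ and $\|L\|_{1 \to 2}\|R\|_{1 \to 2} = \gamma_2^\ell(W; \alpha)$; by the usual rescaling $L \mapsto cL,\ R \mapsto R/c$ we may assume $\|L\|_{1 \to 2} = \|R\|_{1 \to 2} = \sqrt{\gamma_2^\ell(W; \alpha)} =: C$. In particular $\|L \ind_{x_i}\|_2 \leq C$ and $\|R \ind_{x_i}\|_2 \leq C$ for every input, so it is legal to feed $L \ind_{x_i}$ and $R \ind_{x_i}$ to $\VR_{\eps/2,\,C}$. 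The algorithm then has each user $i$ independently send $o^L_i := \VR_{\eps/2,\, C}(L \ind_{x_i})$ and $o^R_i := \VR_{\eps/2,\, C}(R \ind_{x_i})$ (with independent internal randomness), and the analyst returns $\hq := \langle O^L, O^R\rangle$, where $O^L := \frac{1}{n}\sum_i o^L_i$ and $O^R := \frac{1}{n}\sum_i o^R_i$. Privacy is $\eps$-local DP by \Cref{thm:vec-rand} and basic composition.

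Writing $o^L_i = L \ind_{x_i} + z^L_i$, $o^R_i = R \ind_{x_i} + z^R_i$, $Z^L := \frac{1}{n}\sum_i z^L_i$, and $Z^R := \frac{1}{n}\sum_i z^R_i$, I expand
\begin{equation*}
\hq \;=\; \langle L h_{\bx}, R h_{\bx}\rangle + \langle L h_{\bx}, Z^R\rangle + \langle Z^L, R h_{\bx}\rangle + \langle Z^L, Z^R\rangle.
\end{equation*}
The first term equals $h_{\bx}^T \tW h_{\bx}$ and hence differs from the target $h_{\bx}^T W h_{\bx}$ by at most $\alpha \cdot \|h_{\bx}\|_1^2 = \alpha$, giving an $O(\alpha^2)$ contribution to the squared bias. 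Because the noises are mean-zero (by \Cref{thm:vec-rand}) and mutually independent across users and across the $L$/$R$ copies, all three expected cross-products among the noise terms vanish, so the MSE splits into bias$^2$ plus the three second moments analyzed below.

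For the cross terms I will use that $L h_{\bx}$ is a convex combination of the columns of $L$ (since $h_\bx$ lies in the simplex), so $\|L h_{\bx}\|_2 \leq \|L\|_{1 \to 2} = C$, and likewise for $R h_{\bx}$. Averaging $n$ independent $O(C/\eps)$-sub-Gaussian vectors (\Cref{thm:vec-rand}) yields a $\sigma := O(C/(\eps \sqrt{n}))$-sub-Gaussian vector, which gives $\E[\langle L h_{\bx}, Z^R\rangle^2] \leq O(C^2 \sigma^2) = O(\gamma_2^\ell(W; \alpha)^2/(\eps^2 n))$ and similarly for $\E[\langle Z^L, R h_{\bx}\rangle^2]$.

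The $\ell$-dependent term $\E[\langle Z^L, Z^R\rangle^2]$ is the one delicate step, and it is precisely why the full proof of \Cref{thm:alg-main} requires a dimensionality-reduction step on top of this lemma. Using independence of $Z^L, Z^R$ and zero means, this expectation equals $\mathrm{tr}(A B)$ with $A := \mathrm{Cov}(Z^L)$, $B := \mathrm{Cov}(Z^R)$. Both are PSD, so $\mathrm{tr}(AB) \leq \|A\|_{\mathrm{op}} \cdot \mathrm{tr}(B)$; sub-Gaussianity controls $\|A\|_{\mathrm{op}} \leq O(\sigma^2)$ (from the definition applied to unit vectors) and $\mathrm{tr}(B) \leq \ell \cdot O(\sigma^2)$ (applying the definition to each standard basis direction to bound each coordinate variance). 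The naive Cauchy--Schwarz route $|\mathrm{tr}(AB)| \leq \|A\|_F \|B\|_F$ would cost an extra $\ell$ factor, so this tighter operator-norm-versus-trace split is the key to landing on exactly $O(\ell\, \gamma_2^\ell(W; \alpha)^2/(\eps^4 n^2))$. Summing the squared bias and the three variance bounds delivers the stated MSE.
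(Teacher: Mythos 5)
Your proposal is correct and follows essentially the same route as the paper: identical algorithm (privatize $L\ind_{x_i}$ and $R\ind_{x_i}$ via $\VR_{\eps/2,C}$ and take the inner product of the averaged responses), the same rescaled approximate factorization $\tW = L^T R$, and the same four-term error decomposition into bias, two cross terms, and $\langle Z^L, Z^R\rangle$. The only cosmetic differences are that you observe the noise cross-moments vanish (where the paper simply uses $(a+b+c+d)^2 \lesssim a^2+b^2+c^2+d^2$) and you spell out the bound $\E\langle Z^L,Z^R\rangle^2 \lesssim \ell(\sigma')^4$ via $\mathrm{tr}(AB)\leq \|A\|_{\mathrm{op}}\mathrm{tr}(B)$, a step the paper asserts without detail; both changes land on the same asymptotics.
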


\begin{proof}
By definition of $\gamma_2^\ell(W; \alpha)$, there exists $\tW \in \R^{k \times k}, L, R \in \R^{\ell \times k}$ such that $\|\tW - W\|_{\infty} \leq \alpha$ and $\tW = L^T R$ where, w.l.o.g., by rescaling, $\|L\|_{1 \to 2} = \|R\|_{1 \to 2} = \sqrt{\gamma_2^\ell(W)} =: C$.

\textbf{Algorithm Description.}
Let $\bareps = \eps/2$.
The algorithm works as follows:
\begin{itemize}[leftmargin=*]
\item Each user $i \in [n]$ sends  $y^L_i \gets \VR_{\bareps,C}(L \ind_{x_i})$ and $y^R_i \gets  \VR_{\bareps,C}(R \ind_{x_i})$ to the analyst (where $\VR_{\cdot}(\cdot)$ is from \Cref{thm:vec-rand}).
\item The analyst outputs $\left<\frac{1}{n} \sum_{i \in [n]} y^L_i, \frac{1}{n} \sum_{i \in [n]} y^R_i\right>$.
\end{itemize}
As each user runs an $(\eps/2)$-DP randomizer twice,
the algorithm is $\eps$-DP.

\textbf{Utility Analysis.}
Let $z^L_i = y^L_i - L\ind_{x_i}$ and $z^R_i = y^R_i - R\ind_{x_i}$. From \Cref{thm:vec-rand}, $z^L_i, z^R_i$ are zero-mean and $\sigma$-sub-Gaussian for $\sigma = O(C/\eps)$. Let $Z^L := \frac{1}{n} \sum_{i \in [n]} z^L_i, Z^R := \frac{1}{n} \sum_{i \in [n]} z^R_i$; we then have that these are zero-mean and $\sigma'$-sub-Gaussian for $\sigma' = \sigma/\sqrt{n}$.
The MSE of the protocol is given by
\begin{align*}
&\E\left[\left(\left<\frac{1}{n} \sum_{i \in [n]} y^L_i, \frac{1}{n} \sum_{i \in [n]} y^R_i\right> - h_{\bx}^T W h_{\bx}\right)^2\right] \\
&= \E\left[\left(h_{\bx}^T(\tW - W) h_{\bx} + \left<Z^L, Rh\right> + \left<Lh, Z^R\right> + \left<Z^L, Z^R\right>\right)^2\right] \\
&\lesssim \E (h_{\bx}^T(\tW - W) h_{\bx})^2 + \E \left<Z^L, Rh\right>^2 + \E \left<Lh, Z^R\right>^2 + \E \left<Z^L, Z^R\right>^2 \\
&\lesssim \|\tW - W\|_\infty^2 + (\sigma')^2 \|Rh\|_2^2 + (\sigma')^2 \|Lh\|_2^2 + \ell \cdot (\sigma')^4 \\
&\leq \alpha^2 + (\sigma')^2 \|R\|_{1\to 2}^2 + (\sigma')^2 \|L\|_{1\to 2}^2 + \ell \cdot (\sigma')^4 \\
&\lesssim \alpha^2 + (\sigma')^2 C^2 + \ell \cdot (\sigma')^4 \\
&\lesssim \alpha^2 + \frac{C^4}{\eps^2 n} + \frac{\ell \cdot C^4}{\eps^4 n^2}. & &\qedhere
\end{align*}
\end{proof}

\textbf{Rank-Restricted Approximate Factorization via JL.}
We next show that w.l.o.g.\ we can take $\ell$ to be quite small in \Cref{lem:basic-one-round-lem} above. 

\begin{lemma} \label{lem:rank-striction}
Let $W \in \R^{k \times k}$ and $\alpha \in (0, \gamma_2(W))$, there is $\ell \lesssim \frac{\gamma_2(W)^2 \log k}{\alpha^2}$ with $\gamma_2^\ell(W; \alpha) \lesssim \gamma_2(W)$.
\end{lemma}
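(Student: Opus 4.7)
The plan is a Johnson–Lindenstrauss (JL) style sketch of an optimal factorization of $W$. Start with any factorization $W = L^T R$ achieving $\|L\|_{1\to 2} \|R\|_{1\to 2} = \gamma_2(W)$, where $L, R \in \R^{d \times k}$ for some (possibly enormous) $d$. Rescaling if necessary, assume $\|L\|_{1\to 2} = \|R\|_{1\to 2} = \sqrt{\gamma_2(W)}$. Draw a random matrix $A \in \R^{\ell \times d}$ with i.i.d.\ $\cN(0,1/\ell)$ entries and set $\tilde L := AL$, $\tilde R := AR$, $\tilde W := \tilde L^T \tilde R = L^T A^T A R$. Since $\E[A^T A] = I_d$, we have $\E[\tilde W] = W$ entrywise, and $\gamma_2^\ell(\tilde W) \leq \|\tilde L\|_{1\to 2}\|\tilde R\|_{1\to 2}$, which is what we will control.

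The first step is to bound $\|\tilde W - W\|_\infty$ by JL inner-product preservation. For each $i, j \in [k]$, write $(\tilde W - W)_{ij} = \langle AL_i, AR_j\rangle - \langle L_i, R_j\rangle$ where $L_i, R_j$ denote the $i$th and $j$th columns. By polarization $\langle Au, Av\rangle = \tfrac14(\|A(u+v)\|_2^2 - \|A(u-v)\|_2^2)$, together with the standard Gaussian norm-preservation tail bound $\Pr[|\,\|Aw\|_2^2 - \|w\|_2^2\,| > \beta \|w\|_2^2] \leq 2\exp(-c\ell\beta^2)$, one gets $|\langle AL_i, AR_j\rangle - \langle L_i, R_j\rangle| \leq \beta \|L_i\|_2 \|R_j\|_2 \leq \beta\, \gamma_2(W)$ with probability $\geq 1 - 4\exp(-c\ell\beta^2)$. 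Setting $\beta = \alpha/\gamma_2(W)$ and union bounding over all $k^2$ pairs, we need $\ell \gtrsim \log(k)/\beta^2 = \gamma_2(W)^2 \log(k)/\alpha^2$ to obtain $\|\tilde W - W\|_\infty \leq \alpha$ with probability $\geq 3/4$, say.

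The second step is to control $\|\tilde L\|_{1\to 2}$ and $\|\tilde R\|_{1\to 2}$. By the same Gaussian norm-concentration, $\|AL_i\|_2 \leq \sqrt{2}\|L_i\|_2$ for every column $i$ except with probability $2\exp(-c'\ell)$, and similarly for $R$. A union bound over the $2k$ columns shows that if $\ell \gtrsim \log k$ (which is absorbed by our setting of $\ell$ above), then simultaneously $\|\tilde L\|_{1\to 2} \leq \sqrt{2\gamma_2(W)}$ and $\|\tilde R\|_{1\to 2} \leq \sqrt{2\gamma_2(W)}$ with probability $\geq 3/4$.

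Taking a final union bound over the two events above, with positive probability both approximate factorization ($\|\tilde W - W\|_\infty \leq \alpha$) and the norm bound ($\|\tilde L\|_{1\to 2}\|\tilde R\|_{1\to 2} \leq 2\gamma_2(W)$) hold. By the probabilistic method, such an $A$ exists, witnessing $\gamma_2^\ell(W;\alpha) \leq 2\gamma_2(W) \lesssim \gamma_2(W)$. The only real obstacle is arranging the JL guarantee to apply to \emph{inner products} rather than norms, which polarization handles cleanly; everything else is standard concentration plus a union bound, and the final dimension $\ell \lesssim \gamma_2(W)^2 \log(k)/\alpha^2$ comes out exactly as claimed.
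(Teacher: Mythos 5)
Your proof is correct and takes essentially the same route as the paper: both project an optimal factorization $W = L^T R$ through a JL sketch, use polarization on column pairs to bound $\|\tilde W - W\|_\infty$, control the column norms of $\tilde L, \tilde R$ separately, and union bound over $O(k^2)$ events (the paper packages this as a single application of the Dasgupta--Gupta JL lemma to the set $\{0, \pm L_i, \pm R_j\}$, whereas you unfold the random Gaussian matrix and concentration explicitly). One small imprecision: polarization with relative-error norm preservation yields $|\langle AL_i, AR_j\rangle - \langle L_i, R_j\rangle| \le \tfrac{\beta}{2}(\|L_i\|_2^2 + \|R_j\|_2^2)$ rather than the stated $\beta\|L_i\|_2\|R_j\|_2$, but since both are at most $\beta\,\gamma_2(W)$ the conclusion is unaffected.
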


As mentioned earlier, this lemma is proved by applying Johnson--Lindenstrauss (JL) dimensionality reduction to each column of $L, R$. We summarize a simplified version of the JL lemma below. 

\begin{lemma}[Johnson--Lindenstrauss Lemma (e.g., \cite{DasguptaG03})]
Let $\beta\in (0,1)$ and $U = \{u_1,\ldots u_m\} \subseteq \R^d$. 
For some $\ell \leq O(\beta^{-2}\log m)$, there exists a matrix  
$A \in \R^{\ell \times d}$ such that for all $u, v \in U$, we have
$(1-\beta)\|u - v\|_2^2 \le \|Au - Av\|_2^2 \le (1+\beta)\|u-v\|_2^2.$
\label{lem:JL}
\end{lemma}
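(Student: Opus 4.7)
The plan is a standard probabilistic-method argument: exhibit the matrix $A$ by drawing it at random from an appropriate Gaussian ensemble and show that with positive probability it simultaneously preserves the $\binom{m}{2}$ pairwise distances within a $(1 \pm \beta)$ factor. Concretely, I would set $A \in \R^{\ell \times d}$ to have i.i.d.\ entries $A_{ij} \sim \cN(0, 1/\ell)$, so that for any fixed $w \in \R^d$ each coordinate of $Aw$ is distributed as $\cN(0, \|w\|_2^2/\ell)$, and the coordinates are independent. Consequently $\|Aw\|_2^2$ equals $\|w\|_2^2/\ell$ times a $\chi^2_\ell$ random variable, and in particular $\E \|Aw\|_2^2 = \|w\|_2^2$.

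The next step is concentration. I would invoke the standard sub-exponential tail bound for chi-squared variables (e.g., Laurent–Massart, or an MGF computation for Gaussian quadratic forms) to conclude that for some absolute constant $c>0$,
\begin{equation*}
\Pr\left[\, \bigl| \|Aw\|_2^2 - \|w\|_2^2 \bigr| > \beta \|w\|_2^2 \,\right] \;\le\; 2 \exp(-c \beta^2 \ell)
\end{equation*}
for every fixed $w \in \R^d$ and every $\beta \in (0,1)$.

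I would then apply this tail bound to each of the at most $\binom{m}{2}$ difference vectors $w = u_i - u_j$ with $u_i, u_j \in U$, and take a union bound. The probability that any pair in $U$ violates the $(1 \pm \beta)$ distortion bound is at most $2\binom{m}{2}\exp(-c\beta^2 \ell)$, which is strictly less than $1$ provided $\ell \geq C \beta^{-2} \log m$ for a sufficiently large absolute constant $C$; this gives the advertised $\ell = O(\beta^{-2}\log m)$. By the probabilistic method there exists a realization of $A$ for which every pair in $U$ satisfies both inequalities simultaneously, which is exactly the conclusion of the lemma.

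The only nontrivial technical point is the chi-squared concentration with the correct exponential dependence on $\beta^2 \ell$; this is routine but is what forces the $\beta^{-2}$ dependence in the dimension. Everything else—linearity of expectation for the mean, and the union bound over $O(m^2)$ pairs giving the $\log m$ factor—is immediate.
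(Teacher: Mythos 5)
Your proof is correct and is exactly the standard argument for this result: the paper does not prove the lemma itself but cites it (Dasgupta--Gupta), and that reference's proof is the same random Gaussian projection plus chi-squared concentration plus union bound over the $\binom{m}{2}$ difference vectors that you describe. The one technical ingredient you flag --- the $2\exp(-c\beta^2\ell)$ tail for the normalized $\chi^2_\ell$ with $\beta\in(0,1)$ --- is indeed the crux and is standard, so there is no gap.
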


We are now ready to prove \Cref{lem:rank-striction}.

\begin{proof}[Proof of \Cref{lem:rank-striction}]
By definition of $\gamma_2(W)$, there exists $L, R \in \R^{d \times k}$ for some $d \in \N$ such that $W = L^T R$ where
$\|L\|_{1 \to 2} = \|R\|_{1 \to 2} = \sqrt{\gamma_2(W)} =: C$.

Let $L_1, \dots, L_k$ (resp. $R_1, \dots, R_k$) be the columns of $L$ (resp. $R$). Consider 
$U = \{0, L_1, \dots, L_k, R_1, \dots, R_k, -L_1, \dots, -L_k, -R_1, \dots, -R_k\}$
and $\beta = 0.5\alpha / C^2$; let $\ell = O\left(\beta^{-2} \log k\right) = O\left(\frac{\gamma_2(W)^2 \log k}{\alpha^2}\right)$ and $A \in \R^{\ell \times d}$ be as guaranteed by \Cref{lem:JL}. 

Let $\tL = AL, \tR = AR$, and $\tW = \tL^T \tR$.
For all $i \in [k]$, we have $\|\tL_i\|_2^2 \leq (1 + \beta)\|L_i\|_2^2$ and $\|\tR_i\|_2^2 \leq (1 + \beta)\|R_i\|_2^2$. Therefore, $\|\tL\|_{1 \to 2}, \|\tR\|_{1 \to 2} \leq O(C)$, i.e.,  $\gamma_2^\ell(\tW) \leq O(\gamma_2(W))$. Moreover, for each $i, j \in [k]$, we have
\begin{align*}
\left|\tW_{i, j} - W_{i, j}\right| 
&= \left|\left<\tL_i, \tR_j\right> - \left<L_i, R_j\right>\right| \\
&\leq \frac{1}{4} \left(\left|\|\tL_i + \tR_j\|_2^2 - \|L_i + R_j\|_2^2\right| + \left|\|\tL_i - \tR_j\|_2^2 - \|L_i - R_j\|_2^2\right|\right) \\
&\leq \frac{\beta}{4}\left(\|L_i + R_j\|_2^2 + \|L_i - R_j\|_2^2\right) \\
&\leq 2\beta C^2 \leq \alpha.
\end{align*}
Thus, $\|\tW - W\|_{\infty} \leq \alpha$ and therefore $\gamma_2^\ell(W; \alpha) \leq O(\gamma_2(W))$ as claimed.
\end{proof}

We end this section by proving \Cref{thm:alg-main}, which is a simple combination of \Cref{lem:basic-one-round-lem} and \Cref{lem:rank-striction}.

\begin{proof}[Proof of \Cref{thm:alg-main}]
Pick\footnote{We may assume w.l.o.g. that $\alpha \leq \gamma_2(W)$; otherwise, the guarantee in \Cref{thm:alg-main} is trivial, i.e., always outputting 0 satisfies the bound.} $\alpha = \frac{\gamma_2(W)}{\eps\sqrt{n}}$ and apply \Cref{lem:rank-striction}: There exists $\ell = O\left(\frac{\gamma_2(W)^2 \log k}{\alpha^2}\right) = O\left((\log k)\eps^2 n\right)$ such that $\gamma_2^\ell(W; \alpha) \leq O(\gamma_2(W))$. Plugging this back into~\Cref{lem:basic-one-round-lem} then gives us a non-interactive $\eps$-local DP protocol with $\mse$
\begin{align*}
\lesssim \alpha^2 + \gamma_2^\ell(W; \alpha)^2 \cdot \left(\frac{1}{\eps^2 n} +  \frac{\ell}{\eps^4 n^2}\right)
\lesssim \gamma_2(W)^2\left(\frac{1}{\eps^2 n} + \frac{\log k}{\eps^2 n}\right) \lesssim \frac{\gamma_2(W)^2 \log k}{\eps^2 n}.
\end{align*}
Applying \Cref{lem:fac-vs-apx-fac} then concludes the proof.
\end{proof}

\section{Lower Bounds for Non-Interactive Algorithms}
\label{sec:lb-non-interactive-red}

In this section we formalize the reduction from linear queries to computing quadratic forms, as outlined in \Cref{sec:overview}. The properties of the reduction are stated in the theorem below.

\begin{theorem} \label{thm:lb-red}
Let $W \in \R^{k \times k}$ be symmetric.
Suppose that there is a 
non-interactive $\eps$-local DP mechanism for 
computing the quadratic form on $W$ with $\mse$ at most $\alpha(\eps, n)$. Then there is a 
non-interactive $\eps$-local DP protocol for computing the linear queries of $W$ with $\mmse$ $O(\alpha(\eps/2, n) + \alpha(\eps/2, 2n))$.
\end{theorem}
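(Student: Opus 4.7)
The plan is to follow the reduction sketched in \Cref{sec:overview}: given the quadratic-form mechanism $\bA$, build a mechanism $\bA'$ for linear queries in which each real user's randomized response serves as an ``oracle,'' and the analyst locally synthesizes an additional block of $n$ users all holding value $b$, once for each query coordinate $b \in [k]$. Because the analyst's simulated responses involve only synthetic inputs that the analyst itself chose, they incur no privacy cost. The only change from the overview is that the full quadratic form on $2n$ users carries an unwanted diagonal term $h_{\bx}^T W h_{\bx}$, which I will strip off using a second parallel invocation of $\bA$ on just the $n$ real users, splitting the $\eps$-budget as $\eps/2 + \eps/2$.

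Concretely, each user $i$ sends a single non-interactive bundle $(o^{(1)}_i, o^{(2)}_i)$: $o^{(1)}_i$ is produced by running the randomizer of $\bA$ (for $n$-user instances, parameter $\eps/2$) on input $x_i$, and $o^{(2)}_i$ by running the randomizer of $\bA$ (for $2n$-user instances, parameter $\eps/2$) on $x_i$. Basic composition yields $\eps$-local DP, and non-interactivity is preserved since both components depend only on $x_i$. The analyst (i) applies $\bA$'s $n$-user estimator to $(o^{(1)}_1,\dots,o^{(1)}_n)$, obtaining $\hq$ as an estimate of $h_{\bx}^T W h_{\bx}$; (ii) for each $b \in [k]$, itself runs the $2n$-user randomizer on the synthetic input $b$ to sample $\tilde o_{1,b},\dots,\tilde o_{n,b}$, and applies $\bA$'s $2n$-user estimator to $(o^{(2)}_1,\dots,o^{(2)}_n,\tilde o_{1,b},\dots,\tilde o_{n,b})$ to obtain $\hs_b$, estimating $h_{\bx \cup \by_b}^T W h_{\bx \cup \by_b}$ where $\by_b := (b,\dots,b)$; and finally outputs $\hz_b := 2\hs_b - \tfrac12 \hq - \tfrac12 W_{bb}$ for each coordinate $b$.

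The one algebraic step is the identity
\begin{align*}
h_{\bx \cup \by_b}^T W h_{\bx \cup \by_b} \;=\; \tfrac14\bigl(h_{\bx}^T W h_{\bx} + 2\,(W h_{\bx})_b + W_{bb}\bigr),
\end{align*}
obtained by expanding $h_{\bx \cup \by_b} = \tfrac12(h_{\bx} + \ind_b)$ and collapsing the two cross terms $h_{\bx}^T W \ind_b$ and $\ind_b^T W h_{\bx}$ into a single $(Wh_{\bx})_b$ via the \emph{symmetry of $W$} (without symmetry one would only recover $(Wh_{\bx})_b + (W^T h_{\bx})_b$). Rearranging shows that $\hz_b$ is aimed at $(W h_{\bx})_b$, and the inequality $(a+b)^2 \leq 2a^2 + 2b^2$ gives $\mse(\hz_b, (W h_{\bx})_b) \leq 8\,\alpha(\eps/2, 2n) + \tfrac12\,\alpha(\eps/2, n)$. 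Taking the maximum over $b$ and $\bx$ produces the claimed $\mmse$ bound.

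I do not anticipate a substantive obstacle. The two delicate points are (a) recognizing that the analyst's simulated responses spend no privacy budget because they do not depend on any real user's input (this is precisely what fails in the interactive setting and motivates the paper's interactive/non-interactive separation), and (b) using symmetry of $W$ to isolate $(W h_{\bx})_b$ rather than the mixed quantity $(Wh_{\bx})_b + (W^T h_{\bx})_b$. Everything else is bookkeeping: privacy via composition, and MSE via scalar linearity plus the given guarantee on $\bA$.
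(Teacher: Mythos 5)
Your proposal is correct and matches the paper's proof essentially step for step: the same two parallel invocations of the quadratic-form mechanism (one on $n$ users at $\eps/2$, one on $2n$ users at $\eps/2$ with $n$ analyst-simulated inputs per query coordinate), the same polarization identity $(W h_{\bx})_b = 2\,h_{\bx \cup \by_b}^T W h_{\bx \cup \by_b} - \tfrac12 h_{\bx}^T W h_{\bx} - \tfrac12 W_{bb}$, and the same composition and Cauchy--Schwarz bookkeeping. Your explicit remark that symmetry of $W$ is what collapses the two cross terms into $(Wh_{\bx})_b$ makes visible a hypothesis the paper uses silently, which is a small but worthwhile clarification.
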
 
\Cref{thm:lb-red} and the lower bound in \Cref{thm:enu-convenient} immediately imply \Cref{thm:lb-non-interactive-from-linear-queries}.  In the proof below, we use subscripts $\eps, n$ to denote the privacy loss parameter and the number of users in the protocol.
\begin{proof}[Proof of \Cref{thm:lb-red}]
Let $\bA_{\eps, n}$ be the 
$\eps$-local DP protocol for computing the quadratic form on $W$, and let $\cR_{\eps, n}$ denote its randomizer.
We construct an algorithm $\bA'_{\eps, n}$ for linear queries with workload matrix $W$. On input $x_1, \dots, x_n$, proceed as follows:
\begin{itemize}
\item Run the protocol of $\bA_{\eps/2, n}$ on $x_1, \dots, x_n$ to compute an estimate $\hz$ for $h_{\bx}^T \bA h_{\bx}$.
\item In the same round as above, run the randomizer $\cR_{\eps/2, 2n}$ of $\bA_{\eps/2,2n}$ on $x_1, \dots, x_n$ to get the responses $\cR_{\eps/2, 2n}(x_1), \dots, \cR_{\eps/2, 2n}(x_n)$.
\item For each $j \in [k]$, do the following:
\begin{itemize}
\item Run the randomizer $\cR$ of $\bA$ on $y_1 = \cdots = y_n = j$ to get the responses $\cR_{\eps/2, 2n}(y_1)$, $\dots$, $\cR_{\eps/2, 2n}(y_n)$.
\item Compute the estimator $z'_j$ of $\bA$ on the $2n$ responses $\cR_{\eps/2, 2n}(x_1), \dots, \cR_{\eps/2, 2n}(x_n)$, $\cR_{\eps/2, 2n}(y_1), \dots, \cR_{\eps/2, 2n}(y_n)$.
\item Set $\hz_j = 2z'_j - \frac{1}{2} \hz - \frac{1}{2} \ind_j^T W \ind_j$.
\end{itemize}
\item Output $(\hz_1, \dots, \hz_j)$ as the estimates for the linear queries.
\end{itemize}
Since $(\eps/2)$-local DP randomizers are run on each input $x_i$ twice, the basic composition theorem implies that this is a $\eps$-local DP algorithm as desired.

For each $j \in [k]$, we now compute the $\mse$ of $\hz_j$. First, observe that
\begin{align*}
(W h_{\bx})_j = \ind_j^T W h_{\bx} = 2 \left(h_{\bx \cup \by}^T W h_{\bx \cup \by}\right) - \frac{1}{2} h_{\bx}^T W h_{\bx}^T - \frac{1}{2} \ind_j^T W \ind_j,
\end{align*}
where $\by$ denotes the dataset with $n$ copies of $j$.

Therefore, we can bound the $\mse$ of $\hz_j$ as follows:
\begin{align*}
\mse(\hz_j; (W h_{\bx})_j)
&= \E\left[\left(\hz_j - (W h_{\bx})_j\right)^2\right] \\
&= \E\left[\left(2\left(z'_j - h_{\bx \cup \by}^T W h_{\bx \cup \by}\right) + \frac{1}{2}\left(\hz - h_{\bx}^T W h_{\bx}^T\right)\right)^2\right] \\
&\lesssim \E\left[\left(z'_j - h_{\bx \cup \by}^T W h_{\bx \cup \by}\right)^2\right] + \E\left[\left(\hz - h_{\bx}^T W h_{\bx}^T\right)^2\right] \\
&\leq \alpha(\eps/2, 2n) + \alpha(\eps/2, n),
\end{align*}
where the last inequality follows from the guarantees of $\bA$. Thus, the $\mmse$ of $\bA'$ is at most $O(\alpha(\eps/2, 2n) + \alpha(\eps/2, n))$, as desired.
\end{proof}

\section{Upper Bounds for Specific Metrics}
\label{sec:specific-metrics}

In this section, we obtain concrete upper bounds for many well-known U-statistics of degree 2.  For a kernel $f: \cX \to \R$, let $W^f \in \R^{\cX \times \cX}$ denote the matrix defined by $W^f_{x, x'} = f(x, x')$.  Our proof of \Cref{cor:specific-metrics} proceeds by providing an upper bound on $\gamma_2(W^f)$ for each U-statistic with kernel $f$; the bounds immediately follow from \Cref{thm:alg-main}. Similar to before, let $k$ denote $|\cX|$.

The following (non-trivial) facts about the factorization norm are useful to keep in mind:
\begin{fact}
\label{fac:gamma2}
The factorization norm $\gamma_2$ satisfies the following properties: 
\begin{enumerate}[nosep]
\item \cite{tomczak1989banach} \label{lem:gamma2-norm}
For any $A, B$, we have $\gamma_2(A + B) \leq \gamma_2(A) + \gamma_2(B)$.
\item \cite{LeeSS08} \label{lem:gamm2-kron}
For any $A, B$, $\gamma_2(A \otimes B) = \gamma_2(A) \cdot \gamma_2(B)$.
\end{enumerate}
\end{fact}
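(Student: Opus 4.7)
The plan is to prove the two parts separately, leveraging near-optimal factorizations in each case.

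For Part 1 (subadditivity), I would start with near-optimal factorizations $A = L_A^\top R_A$ and $B = L_B^\top R_B$, rescaled so that $\|L_A\|_{1\to 2} = \|R_A\|_{1\to 2} = \sqrt{\gamma_2(A)}$ and similarly for $B$. The key move is to stack these factorizations vertically: define
\[
L = \begin{pmatrix} L_A \\ L_B \end{pmatrix}, \quad R = \begin{pmatrix} R_A \\ R_B \end{pmatrix},
\]
so that $L^\top R = L_A^\top R_A + L_B^\top R_B = A+B$. For any column index $j$, the $j$th column of $L$ has squared norm $\|(L_A)_j\|_2^2 + \|(L_B)_j\|_2^2$, giving $\|L\|_{1\to 2}^2 \leq \|L_A\|_{1\to 2}^2 + \|L_B\|_{1\to 2}^2 = \gamma_2(A) + \gamma_2(B)$, and symmetrically for $R$. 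Hence $\gamma_2(A+B) \leq \|L\|_{1\to 2} \|R\|_{1\to 2} \leq \gamma_2(A) + \gamma_2(B)$.

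For Part 2, the easy direction $\gamma_2(A\otimes B) \leq \gamma_2(A)\gamma_2(B)$ follows by tensoring the factorizations: if $A=L_A^\top R_A$ and $B=L_B^\top R_B$ with near-optimal norms, then $A\otimes B = (L_A \otimes L_B)^\top (R_A \otimes R_B)$, and each column of $L_A \otimes L_B$ is of the form $(L_A)_i \otimes (L_B)_j$ with Euclidean norm $\|(L_A)_i\|_2 \cdot \|(L_B)_j\|_2$. Taking maxima yields $\|L_A \otimes L_B\|_{1\to 2} = \|L_A\|_{1\to 2}\cdot \|L_B\|_{1\to 2}$, and similarly for $R_A \otimes R_B$.

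The main obstacle is the reverse inequality $\gamma_2(A\otimes B) \geq \gamma_2(A)\gamma_2(B)$. Here one cannot simply manipulate a factorization of $A\otimes B$, because optimal factorizations of a tensor product need not be tensor products themselves. My plan is to pass to the dual characterization: $\gamma_2$ admits a dual norm $\gamma_2^*$ with $\gamma_2(A) = \sup_{X} \langle A, X \rangle / \gamma_2^*(X)$, and equivalently an SDP formulation via the constraint $\bigl(\begin{smallmatrix} P & A \\ A^\top & Q \end{smallmatrix}\bigr) \succeq 0$ with the objective $\sqrt{\|\mathrm{diag}(P)\|_\infty \|\mathrm{diag}(Q)\|_\infty}$. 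The strategy is to take dual witnesses $X_A, X_B$ certifying $\gamma_2(A)$ and $\gamma_2(B)$ respectively, and to show that $X_A \otimes X_B$ witnesses the desired lower bound for $A\otimes B$ — this requires verifying that $\gamma_2^*(X_A \otimes X_B) \leq \gamma_2^*(X_A)\cdot \gamma_2^*(X_B)$, which itself is proved by tensoring primal factorizations within the dual formulation. Alternatively, one can take the SDP route and check that tensoring the optimal PSD completions for $A$ and $B$ gives a feasible PSD completion for $A\otimes B$ with the correct objective value — this is the cleanest path, but requires some care with the Schur-complement reformulation and the $\|\mathrm{diag}(\cdot)\|_\infty$ objective behaving multiplicatively on tensor products of PSD matrices. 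This duality step is where the cited result of \cite{LeeSS08} does the real work, and where I would rely on their argument rather than reproduce it.
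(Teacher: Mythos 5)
The paper does not prove \Cref{fac:gamma2}; it simply cites \cite{tomczak1989banach} for item~1 and \cite{LeeSS08} for item~2. So there is no ``paper's proof'' to compare against, and the question is simply whether your sketch is sound.

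Your argument for item~1 (vertical stacking of rescaled factorizations) is correct and is the standard proof of subadditivity of $\gamma_2$. Your argument for the easy direction of item~2, $\gamma_2(A\otimes B) \le \gamma_2(A)\gamma_2(B)$, via tensoring factorizations and noting that $\|(L_A)_i \otimes (L_B)_j\|_2 = \|(L_A)_i\|_2\,\|(L_B)_j\|_2$, is also correct.

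For the hard direction $\gamma_2(A\otimes B) \ge \gamma_2(A)\gamma_2(B)$, your dual-witness plan is aimed in the right direction: take near-optimal dual witnesses $X_A, X_B$ and show $\gamma_2^*(X_A\otimes X_B) \le \gamma_2^*(X_A)\,\gamma_2^*(X_B)$, which, combined with $\langle A\otimes B, X_A\otimes X_B\rangle = \langle A, X_A\rangle\langle B, X_B\rangle$, yields the lower bound. But you do not actually prove the sub-multiplicativity of $\gamma_2^*$ under $\otimes$, which is where the substance of \cite{LeeSS08} lies; you correctly flag that you are relying on their argument. More importantly, your ``alternative'' SDP route is mis-aimed: tensoring feasible \emph{primal} PSD completions of
$\bigl(\begin{smallmatrix} P_A & A \\ A^\top & Q_A \end{smallmatrix}\bigr)$ and
$\bigl(\begin{smallmatrix} P_B & B \\ B^\top & Q_B \end{smallmatrix}\bigr)$
and extracting the principal submatrix on the $(P_A\otimes P_B, Q_A\otimes Q_B)$ blocks yields a feasible primal completion for $A\otimes B$, which only certifies an \emph{upper} bound on $\gamma_2(A\otimes B)$ --- i.e., it re-proves the easy direction you already have. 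To get the lower bound from an SDP perspective one must tensor feasible \emph{dual} SDP solutions, which is just the $\gamma_2^*$ argument in disguise. So the ``alternative'' route is not actually an alternative; the dual sub-multiplicativity step remains the crux and must be taken from \cite{LeeSS08} as you do.
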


\textbf{Gini’s diversity index.}
For
$\cX \subseteq \R$, the kernel
$f(x, x') = \ind[x \ne x']$  captures \emph{Gini’s diversity index}.
From \Cref{fac:gamma2}(1), we have $\gamma_2(W^f) \leq \gamma_2(\ind_{k \times k}) + \gamma_2(\mathbf{I}_{k \times k}) \leq 1 + 1$ where the inequality $\gamma_2(\ind_{k \times k}) \leq 1$ is from the factorization $L = R = \ind_k$ and $\gamma_2(\mathbf{I}_{k \times k}) \leq 1$ is from $L = R = \mathbf{I}_{k \times k}$.

\textbf{\boldmath Kendall's $\tau$ coefficient and AUC.}
For $\cX = A \times B \subseteq \R^2$ with $x_i = (y_i, z_i)$, the kernel 
$f((y_i, z_i), (y_j, z_j)) = \sgn(y_i - y_j) \cdot \sgn(z_i - z_j)$ yields \emph{Kendall's $\tau$ coefficient}. Let $U_{m} \in \{-1, 1\}^m$ denote the matrix that has $+1$ on all entries above the main diagonal (inclusive) and $-1$ elsewhere. It is known that $\gamma_2(U_m) = \Theta(\log m)$. We can bound $\gamma_2(W^f)$ for Kendall’s $\tau$ coefficient by observing that $W^f = U_{A} \otimes U_{B}$. From \Cref{fac:gamma2}(2), $\gamma_2(W_f) \le \gamma_2(U_{A}) \cdot \gamma_2(U_{B}) \lesssim \left(\log |A|\right)\left(\log |B|\right) \leq (\log k)^2$.

\emph{AUC} for binary classification is defined in a similar manner as Kendall's tau coefficient, except that (i) $B = \{0,1\}$ and (ii) the normalization constant being $\frac{1}{n^{+} n^{-}}$ instead if $\frac{1}{\binom{n}{2}}$ where $n^{+}$ (resp., $n^{-}$) denotes the number of 1-labeled (resp., 0-labeled) examples. The AUC result follows from the above since $|B| = 2$ in this case. We remark that, for the AUC case, we also have to split the privacy budget and use half of it to estimate $n^{+}$ so that we can renormalize correctly. It is not hard to see that this renormalization procedure results in at most an additive factor of $O\left(\frac{1}{\eps^2 n}\right)$ in the $\mse$, under the \emph{balancedness assumption} that $n^{-}, n^{+} \geq \Omega(n)$.

\textbf{Lipschitz Losses.}
Let $\cX = [0, 1]$ and let $f: \cX \to \R$ be any function such that $|f(x) - f(x')| \leq G \cdot |x - x'|$; we call $f$ \emph{$G$-Lipschitz}. This class includes U-statistics such as the \emph{Gini's mean difference}, which is given by the kernel $f(x_i, x_j) = |x_i-x_j|$, which is 1-Lipschitz.

Similar to \cite{BellBGK20}, we use a discrete case where $\cX = [k]$ as a subroutine. Our algorithm for this is stated below. Note that \Cref{cor:pairwise-disc} immediately implies the bound for the continuous case: given any function $f: [0, 1] \to \R$, we may select $k$ to be sufficiently large, e.g., $k = \Theta(\eps n^2)$, and discretize the function over the points $1/k, 2/k, \dots, k/k$. Defining $g: [k] \to \R$ by $g(i) = f(i/k)$ allows us to use \Cref{cor:pairwise-disc} with Lipschitz constant $G/k$. This leads to a $\mse$ of $O\left(\frac{G^2 \log(\eps n^2)}{\eps^2 n}\right)$. The $\mse$ resulting from the discretization error is then at most $\frac{G^2n^2}{k^2} < \frac{G^2}{\eps^2 n}$.

\begin{corollary}[Discrete Lipschitz]  \label{cor:pairwise-disc}
Assuming $\cX = [k]$ and that $f$ is $G$-Lipschitz.
There is a non-interactive $\eps$-local DP algorithm for computing pairwise statistics for $f$ with $\mse$ $O\left(\frac{G^2k^2(\log k)}{\eps^2 n}\right)$.
\end{corollary}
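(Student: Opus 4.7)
My plan is to invoke \Cref{thm:alg-main} and bound $\gamma_2(W^f) \leq O(Gk)$; plugging this bound into \Cref{thm:alg-main} delivers the claimed $\mse$ of $O(G^2 k^2 \log k / (\eps^2 n))$. As a preliminary step, I would normalize so that $f(1,1) = 0$: replacing $f$ by $f - f(1,1)$ shifts $h_{\bx}^T W^f h_{\bx}$ by the known constant $f(1,1) \cdot h_{\bx}^T \mathbf{1} \mathbf{1}^T h_{\bx} = f(1,1)$, which the analyst can simply add back in post-processing. Under this normalization, $G$-Lipschitzness in the second coordinate gives $|f(1, j)| = |f(1, j) - f(1, 1)| \leq G(j-1) \leq Gk$, and symmetrically for $|f(i, 1)|$.

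Next, I would split $W^f = W^{(1)} + W^{(2)}$, where $W^{(1)}_{i,j} := f(1, j)$ (independent of $i$) and $W^{(2)}_{i,j} := f(i, j) - f(1, j)$. The matrix $W^{(1)} = \mathbf{1} \cdot f(1, \cdot)^T$ is rank one, and the rank-one factorization $L = \mathbf{1}^T$, $R = f(1, \cdot)^T$ gives $\gamma_2(W^{(1)}) \leq \|\mathbf{1}\|_\infty \cdot \|f(1, \cdot)\|_\infty \leq Gk$. For $W^{(2)}$, I would telescope in the first coordinate: setting $\delta_a \in \R^k$ with $\delta_a(j) := f(a+1, j) - f(a, j)$, Lipschitzness in the first coordinate gives $\|\delta_a\|_\infty \leq G$, and a direct calculation shows
$$W^{(2)} \;=\; \sum_{a=1}^{k-1} \mathbf{1}_{>a} \cdot \delta_a^T,$$
where $(\mathbf{1}_{>a})_i := \mathbf{1}[i > a]$. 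Each summand is a rank-one matrix with $\gamma_2 \leq \|\mathbf{1}_{>a}\|_\infty \cdot \|\delta_a\|_\infty \leq G$, so subadditivity of $\gamma_2$ (\Cref{fac:gamma2}(1)) gives $\gamma_2(W^{(2)}) \leq (k-1) G = O(Gk)$.

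Combining by \Cref{fac:gamma2}(1) yields $\gamma_2(W^f) \leq \gamma_2(W^{(1)}) + \gamma_2(W^{(2)}) = O(Gk)$, and the stated $\mse$ bound then follows from \Cref{thm:alg-main}. The main subtlety is getting a $\gamma_2$ bound that scales as $Gk$ rather than the weaker $Gk^{3/2}$: a more direct factorization $W^{(2)} = M \Delta$ with $M$ the strictly lower-triangular $\{0,1\}$-matrix and $\Delta_{a,j} = \delta_a(j)$ would give $\|M^T\|_{1 \to 2} \leq \sqrt{k}$ and $\|\Delta\|_{1 \to 2} \leq \sqrt{k} \cdot G$, losing a $\sqrt{k}$ factor. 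The key move is instead to view $W^{(2)}$ as a sum of $k-1$ rank-one pieces and combine the triangle inequality for $\gamma_2$ with the bound $\gamma_2(uv^T) \leq \|u\|_\infty \|v\|_\infty$; this replaces column-$\ell_2$ norms of the $\delta_a$'s by their much smaller $\ell_\infty$ norms, recovering the missing $\sqrt{k}$.
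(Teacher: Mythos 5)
Your proof is correct, and it arrives at the required bound $\gamma_2(W^f)\le O(Gk)$ by a genuinely different route than the paper. The paper arranges $[k]$ into a balanced binary search tree of depth $O(\log k)$ and factorizes $W^f$ along root-to-node paths, with geometrically decaying weights $(5/6)^{\ell}$ on $R$ and compensating $(6/5)^{\ell}$ weights on $L$, balancing column norms so that $\|R\|_{1\to2}=O(1)$ and $\|L\|_{1\to2}=O(Gk)$. You instead normalize $f(1,1)=0$, peel off the rank-one part $W^{(1)} = \mathbf{1}\,f(1,\cdot)^T$, telescope $W^{(2)}_{ij}=f(i,j)-f(1,j)$ into $k-1$ rank-one pieces $\mathbf{1}_{>a}\delta_a^T$, and combine the rank-one bound $\gamma_2(uv^T)\le\|u\|_\infty\|v\|_\infty$ with subadditivity from \Cref{fac:gamma2}(1). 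Your argument is arguably more elementary (no tree construction, no geometric weighting), and it is a clean illustration of how $\gamma_2$-subadditivity over rank-one pieces can replace column-$\ell_2$ norms by $\ell_\infty$ norms.

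One remark in your write-up is mistaken, though it does not affect the proof: you claim the ``direct'' factorization $W^{(2)} = M\Delta$ with $L=M^T$, $R=\Delta$ would lose a $\sqrt{k}$ factor and give $Gk^{3/2}$. In fact $\|M^T\|_{1\to2}\le\sqrt{k-1}$ and $\|\Delta\|_{1\to2}\le G\sqrt{k-1}$ multiply to $G(k-1)=O(Gk)$, the same bound, so the direct factorization works just as well and there is no $\sqrt{k}$ to recover. The rank-one-sum route is not the ``key move'' you describe; it is simply an equally valid (and perhaps more transparent) way to reach $O(Gk)$. Everything else --- the normalization step and why it leaves the quadratic form changed by a known additive constant $f(1,1)$, the Lipschitz bounds $|f(1,j)|\le Gk$ and $\|\delta_a\|_\infty\le G$, the telescoping identity $W^{(2)}=\sum_{a=1}^{k-1}\mathbf{1}_{>a}\delta_a^T$, and the final invocation of \Cref{thm:alg-main} --- is sound.
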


Again, the above corollary follows from \Cref{thm:alg-main} and the following factorization.

\begin{lemma}
Assuming that $f: [k] \to \R$ is $G$-Lipschitz, then $\gamma_2(W^f) \leq O(Gk)$.
\end{lemma}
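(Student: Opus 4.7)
The plan is to split $W^f$ into a rank-one matrix depending only on one argument plus a residual whose first row vanishes, bound $\gamma_2$ of each piece separately, and invoke the subadditivity in \Cref{fac:gamma2}(1). Throughout, I treat $f$ as a symmetric kernel on $[k]^2$ that is $G$-Lipschitz in each argument; subtracting the constant $f(1,1)$ from the kernel only shifts the pairwise statistic by a fixed amount that the analyst can correct for, so I assume without loss of generality that $f(1, 1) = 0$.

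Concretely, write $W^f = A + B$ where $A[i, j] := f(1, j)$ and $B[i, j] := f(i, j) - f(1, j)$. The matrix $A$ is rank-one: $A = \ind \cdot v^T$ with $v_j = f(1, j)$. The Lipschitz hypothesis together with $f(1,1) = 0$ gives $|v_j| \leq G(j - 1) \leq Gk$, so by factoring the rank-one matrix as $L = \ind^T$, $R = v^T$ (each in $\R^{1 \times k}$) we get $\gamma_2(A) \leq \|\ind\|_\infty \cdot \|v\|_\infty \leq Gk$.

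For $B$, observe that $B[1, j] = 0$ for every $j$, while $|B[s, j] - B[s-1, j]| = |f(s, j) - f(s-1, j)| \leq G$ for all $s \geq 2$ and all $j$. I factor $B = L^T R$ with $L, R \in \R^{(k-1) \times k}$ defined for $s = 2, \dots, k$ by
\begin{align*}
L[s-1, i] &:= \ind[s \leq i], \\
R[s-1, j] &:= f(s, j) - f(s-1, j).
\end{align*}
A telescoping sum verifies $(L^T R)[i, j] = \sum_{s=2}^{i} (f(s, j) - f(s-1, j)) = f(i,j) - f(1,j) = B[i, j]$. Column $i$ of $L$ is a prefix-indicator vector of $\ell_2$-norm at most $\sqrt{k-1}$, while each entry of $R$ has absolute value at most $G$, so each column of $R$ has $\ell_2$-norm at most $G\sqrt{k-1}$. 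Hence $\gamma_2(B) \leq \|L\|_{1 \to 2} \cdot \|R\|_{1 \to 2} \leq \sqrt{k} \cdot G\sqrt{k} = Gk$.

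Adding the two contributions via \Cref{fac:gamma2}(1) yields $\gamma_2(W^f) \leq \gamma_2(A) + \gamma_2(B) \leq 2Gk = O(Gk)$. The single idea doing the work is the prefix-sum (staircase) factorization of $B$: once the rank-one piece carrying the first-row boundary values is peeled off, the Lipschitz increments in the row direction can be expressed as the product of a prefix-indicator matrix (columns of $\ell_2$-norm $\sqrt{k}$) and a matrix of entrywise-bounded increments (columns of $\ell_2$-norm $G\sqrt{k}$), whose dimensions balance to give exactly the $O(Gk)$ bound. There is no real obstacle beyond identifying this decomposition — the naive bound $\gamma_2(W^f) \leq \sqrt{k} \|W^f\|_\infty = O(Gk^{3/2})$ is too weak, and it is the cancellation between the two $\sqrt{k}$ factors arising from prefix structure and from bounded increments (rather than bounded values) that saves a $\sqrt{k}$.
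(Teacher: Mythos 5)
Your proof is correct, and it takes a genuinely different route from the paper's. The paper normalizes so that $\|W^f\|_\infty \leq Gk$, arranges $[k]$ into a balanced binary search tree of depth $O(\log k)$, and builds a factorization $L^T R = W^f$ indexed by tree nodes: $R$ selects the root-to-$j$ path with exponentially decaying weights $(5/6)^{\ell}$, while $L$ records parent-to-child increments $f(i,j) - f(i,\parent(j))$ scaled by $(6/5)^{\ell}$; the Lipschitz condition in the tree coordinate bounds each increment by $O(Gk/2^d)$ at depth $d$, and summing the geometric series gives $\|R\|_{1\to 2} = O(1)$, $\|L\|_{1\to 2} = O(Gk)$. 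You instead peel off the rank-one matrix $A[i,j] = f(1,j)$ (bounded by $Gk$ via the Lipschitz condition in the second argument and the centering $f(1,1)=0$) and factor the residual $B$ via a \emph{linear} prefix-sum decomposition: $L$ is a prefix-indicator matrix with columns of norm $\leq \sqrt{k}$, and $R$ holds the first-argument increments $f(s,j)-f(s-1,j)$, each of magnitude $\leq G$, giving columns of norm $\leq G\sqrt{k}$. The product is again $O(Gk)$, and subadditivity from \Cref{fac:gamma2}(1) finishes. Your decomposition is simpler and more elementary: it replaces the binary tree and the exponential reweighting with a flat staircase plus one rank-one correction, and the two $\sqrt{k}$ factors balance automatically without any tuning of the weight base. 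The tree factorization is the more general-purpose tool (it is what makes the $\gamma_2(U_m) = O(\log m)$ bound for the triangular $\pm 1$ matrix work, and hence drives the Kendall's $\tau$ and AUC results), but for the Lipschitz lemma specifically your prefix-sum factorization is the cleaner argument.

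One small shared wrinkle, which you handle correctly and the paper also flags: the lemma as literally stated is false without centering (take $f$ constant equal to a huge $M$; then $G=0$ but $\gamma_2(W^f)=M$). Both proofs silently replace the claim with its centered version and note that the corresponding additive shift in the statistic is known to the analyst and costs nothing. Your phrasing "a fixed amount that the analyst can correct for" makes this explicit, which is good; you might add that under the centering $f(1,1)=0$ the entire matrix satisfies $\|W^f\|_\infty \leq 2Gk$, matching the paper's normalization $\|W^f\|_\infty \lesssim Gk$.
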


\begin{proof}
We assume w.l.o.g. that $k = 2^q - 1$ for some $q \in \BN$ and $\|W^f\|_\infty \leq L k$; otherwise, we may shift $W^f$ (and the answer) without incurring any additional error. We arrange $[k]$ into a balanced binary search tree $\cT$ of depth $q-1$ naturally (where the root is $2^{q - 1}$ and the leaves are $1, 3, \dots, k$). Let $P(j)$ denote the path from node $j$ to the root (inclusive) in $\cT$, and let $\ell(j)$ denote the depth of $j$ (where the root has depth 0).
Furthermore, let $\parent(j)$ denote the parent of $j$ in $\cT$. For notational convenience, let $\parent(2^{q - 1}) = \perp$ and let $f(i, \perp) = 0$ for all $i \in [k]$.

We construct $L, R \in \R^{k \times k}$ as follows.
\begin{itemize}[nosep]
\item For all $i, j \in [k]$, let $R_{i, j} = \left(\frac{5}{6}\right)^{\ell(i)}\ind[i \in P(j)]$.
\item For all $i, j \in [k]$, 
$L_{j, i} = \left(\frac{6}{5}\right)^{\ell(j)}(f(i, j) - f(i, \parent(j)))$.
\end{itemize}

For $i, j \in [k]$, we have $(L^T R)_{i,j} = \sum_{t \in P(j)} (f(i, t) - f(i, \parent(t))) = f(i, j)$.
Thus, $L^T R = W$.

Furthermore, $\|R\|_{1 \to 2}^2 = 1 + \left(\frac{5}{6}\right)^2 + \cdots + \left(\frac{5}{6}\right)^{2(q - 1)} \lesssim 1$. Meanwhile, we can bound $\|L\|_{1 \to 2}^2$ by
\begin{align*}
\max_{i \in [k]} \sum_{j \in [k]} \left(\left(\frac{6}{5}\right)^{\ell(j)}(f(i, j) - f(i, \parent(j)))\right)^2 
&\lesssim \sum_{d = 0}^{q-1} 2^d \left(\left(\frac{6}{5}\right)^{d} \cdot \frac{Gk}{2^d}\right)^2 \lesssim G^2k^2,
\end{align*}
where the first inequality follows since $f$ is $G$-Lipschitz. Thus $\gamma_2(W^f) \leq O(Gk)$.
\end{proof}

\section{Conclusion and Open Questions}

In this work, we systematically study the problem of privately computing pairwise statistics. We give a non-interactive local DP algorithm and a nearly-matching lower bound for the problem. Furthermore, we show that, for some metrics, improvements can be made if interaction is allowed.

There are several immediate questions from our work. For example, is it possible to remove the $\log k$ multiplicative factor in our non-interactive algorithm (\Cref{thm:alg-main})? Similarly, can the second additive term in our interactive algorithm be removed? As also suggested by~\cite{BellBGK20}, an intriguing research direction is to study more complicated statistics such as the ``higher-degree'' ones (e.g., those involving triplets instead of pairs). It would be interesting to see if techniques from linear queries and from our work can be applied to these problems.

\newpage

\bibliographystyle{alpha}
\bibliography{ref}

\newpage

\appendix

\newcommand{\KL}{\mathrm{KL}}
\newcommand{\TV}{\mathrm{tv}}

\section{Additional Preliminaries}
\label{sec:additional-preliminaries}

We use $d_{\TV}, d_{\KL}$ to denote the total variation (TV) distance and the Kullback–Leibler (KL) divergence between two distributions. Pinsker's inequality states that 
\begin{align} \label{eq:pinsker}
d_{\TV}(P, Q) \leq \sqrt{\frac{1}{2} \cdot d_{\KL}(P~\|~Q)},
\end{align} for all distributions $P, Q$.

\subsection*{Interactive Local DP}

For interactive local DP, we consider protocols that proceed in rounds. In each round, the analyst may send a message (which may depend on what the analyst has received in the previous rounds) to each user, who then replies back with a (randomized) response. The DP guarantee is then enforced on the view of the analyst.

\subsection*{Clipping and Resulting Error}

\emph{Clipping} is a standard technique in DP to achieve bounded sensitivity (cf.~\cite{AbadiCGMMT016}). In our interactive algorithm, we will use clipping on real values to bound their sensitivity. More specifically, for $\tau \geq 0$, let $\clip_{\tau}: \R \to \R$ denote the function:
\begin{align*}
\clip_{\tau}(x) =
\begin{cases}
\tau &\text{ if } x > \tau, \\
x &\text{ if } \tau \geq x \geq -\tau, \\
-\tau &\text{ if } -\tau > x.
\end{cases}
\end{align*}

There could be additional errors resulting from clipping. For the purpose of bounding such terms, we will use the following simple lemma:

\begin{lemma} \label{lem:clip-err}
Let $X$ denote any random variable over $\R$ and any $\tau \geq 0$. Then, we have
\begin{align*}
\E[(X - \clip_{\tau}(X))^2] \leq \sqrt{\E[X^4] \cdot \Pr[|X| > \tau]}.
\end{align*}
\end{lemma}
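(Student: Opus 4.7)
The plan is to reduce the statement to a straightforward application of Cauchy--Schwarz after bounding the clipping error pointwise.

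First I would establish the pointwise bound $(X - \clip_\tau(X))^2 \leq X^2 \cdot \ind[|X| > \tau]$. This is immediate by case analysis: when $|X| \leq \tau$, the left-hand side is zero, and when $|X| > \tau$, the difference $X - \clip_\tau(X)$ equals either $X - \tau$ or $X + \tau$, both of which have absolute value at most $|X|$.

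Taking expectations gives $\E[(X - \clip_\tau(X))^2] \leq \E[X^2 \cdot \ind[|X| > \tau]]$. Then I would apply the Cauchy--Schwarz inequality to the inner product of $X^2$ with the indicator $\ind[|X| > \tau]$, yielding
\begin{align*}
\E[X^2 \cdot \ind[|X| > \tau]] \leq \sqrt{\E[X^4] \cdot \E[\ind[|X| > \tau]^2]} = \sqrt{\E[X^4] \cdot \Pr[|X| > \tau]},
\end{align*}
where the final equality uses that $\ind[\cdot]^2 = \ind[\cdot]$. Chaining the two inequalities gives the claim.

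There is essentially no obstacle here: the lemma is a two-line exercise once the pointwise domination by $X^2$ on the event $\{|X| > \tau\}$ is noted. The only mild subtlety is remembering not to try to bound $|X - \clip_\tau(X)|$ by $|X| - \tau$ (which would not give the right form after Cauchy--Schwarz); using the looser but cleaner bound by $|X|$ is what makes the second moment factor come out as $\E[X^4]$.
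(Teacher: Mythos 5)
Your proof is correct and follows essentially the same route as the paper's: pointwise domination of the clipping error by $X^2\cdot\ind[|X|>\tau]$, followed by Cauchy--Schwarz. The only difference is that you spell out the case analysis behind the pointwise bound, which the paper leaves implicit.
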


\begin{proof}
We can bound the LHS term by
\begin{align*}
\E[(X - \clip_{\tau}(X))^2] &\leq \E[X^2 \cdot \ind[|X| > \tau]] \\
&\leq \sqrt{\E[X^4] \cdot \E[\ind[|X| > \tau]^2]} \\
&= \sqrt{\E[X^4] \cdot \Pr[|X| > \tau]},
\end{align*}
where in the second step we used the Cauchy--Schwarz inequality.
\end{proof}

\subsection*{Factorization vs Approximate-Factorization: Proof of \Cref{lem:fac-vs-apx-fac}}

\begin{proof}[Proof of \Cref{lem:fac-vs-apx-fac}]
Let $W$ be any matrix. By definition of $\zeta$, there exists $\tW$ such that $\zeta(W;n) = \gamma_2(\tW) + \|\tW - W\|_{\infty} \cdot (\eps\sqrt{n})$.
The new protocol $\bA'$ simply runs $\bA$ but on $\tW$. Let $\hz$ denote the output of $\bA'$. Its MSE can be bounded by
\begin{align*}
\E\left[\left(\hz - h_{\bx}^T W h_{\bx}\right)^2\right]
&= \E\left[\left(\left(\hz - h_{\bx}^T \tW h_{\bx}\right) + \left(h_{\bx}^T \tW h_{\bx} - h_{\bx}^T W h_{\bx}\right)\right)^2\right] \\
&\lesssim  \E\left[\left(\hz - h_{\bx}^T \tW h_{\bx}\right)^2\right] +  \E\left[\left(h_{\bx}^T \tW h_{\bx} - h_{\bx}^T W h_{\bx}\right)^2\right] \\
&\lesssim \left(c(n, \eps, k) \cdot\frac{\gamma_2(\tW)^2}{\eps^2 n}\right) + \|\tW - W\|_{\infty}^2 \\
&\lesssim c(n, \eps, k) \cdot \frac{\zeta(W;n)^2}{\eps^2 n} + \frac{\zeta(W;n)^2}{\eps^2 n} \\
&\lesssim c(n, \eps, k) \cdot \frac{\zeta(W;n)^2}{\eps^2 n}. & & \qedhere
\end{align*}
\end{proof}

\subsection*{Projection Mechanism}

As outlined in \Cref{sec:overview}, we will use the \emph{projection mechanism} as a subroutine for our algorithm.
For $W \in \R^{\ell \times k}$, we define $W^\Delta$ to denote the set $\{Wx \mid x \in \R^{k}, \|x\|_1 \leq 1\}$. The guarantees of the algorithm are summarized below.

\begin{theorem}[\cite{BlasiokBNS19}] \label{thm:proj}
For any workload $W \in \R^{\ell \times k}$, there is a non-interactive $\eps$-local DP mechanism $\cM$ such that 
\begin{align*}
\E_{\mu \sim \cM(\bx)}[\|\mu - W h_{\bx}\|_2^2] \lesssim \frac{\|W\|_{1 \to 2}^2 \sqrt{\log k}}{\eps \sqrt{n}}.
\end{align*}
Moreover, the output $\cM(W)$ always belongs to $W^\Delta$.
\end{theorem}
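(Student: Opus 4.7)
The plan is to use a projection-based approach: aggregate local randomized estimates of $W\ind_{x_i}$ and then project the aggregate onto the convex feasible set $W^\Delta$, so that additional geometric structure can be exploited in the error analysis.

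\textbf{Mechanism.} First, observe that $\|W\ind_{x_i}\|_2$ equals the $\ell_2$-norm of the $x_i$-th column of $W$, which is at most $\|W\|_{1\to 2} =: C$. Thus each user $i$ applies the vector randomizer $\VR_{\eps, C}$ from \Cref{thm:vec-rand} to $W\ind_{x_i}$ and sends the result $y_i$ to the analyst; local $\eps$-DP is immediate. The analyst forms $\hat\mu := \frac{1}{n}\sum_{i\in [n]} y_i$ and notes that $\E[\hat\mu] = W h_{\bx}$ and that $Z := \hat\mu - W h_{\bx}$ is $\sigma'$-sub-Gaussian with $\sigma' = O(C/(\eps\sqrt n))$ (by averaging $n$ independent $O(C/\eps)$-sub-Gaussian vectors). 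Finally, the analyst outputs $\mu := \arg\min_{v \in W^\Delta} \|v - \hat\mu\|_2$, which by construction lies in $W^\Delta$.

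\textbf{Utility via the projection identity.} Since $\|h_{\bx}\|_1 = 1$, the true answer satisfies $Wh_{\bx} \in W^\Delta$. The first-order optimality of the projection onto the closed convex set $W^\Delta$ gives $\langle \hat\mu - \mu, v - \mu\rangle \leq 0$ for every $v \in W^\Delta$; applying this at $v = Wh_{\bx}$ and rearranging yields $\|\mu - Wh_{\bx}\|_2^2 \leq \langle Z, \mu - Wh_{\bx}\rangle$. Crucially, $W^\Delta$ is symmetric about the origin (because the unit $\ell_1$-ball is), so $\mu - Wh_{\bx} \in 2W^\Delta$, i.e., $\mu - Wh_{\bx} = 2Wu$ for some $u \in \R^k$ with $\|u\|_1 \leq 1$. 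Substituting and applying Hölder's inequality, $\|\mu - Wh_{\bx}\|_2^2 \leq 2\langle W^T Z, u\rangle \leq 2\|W^T Z\|_\infty$.

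\textbf{Bounding the $\ell_\infty$-norm.} Each coordinate $(W^T Z)_j$ is a sub-Gaussian linear functional of $Z$ with parameter at most $\|W_{\cdot j}\|_2 \cdot \sigma' \leq C\sigma' = O(C^2/(\eps\sqrt n))$. The standard maximal inequality for $k$ sub-Gaussian random variables then gives $\E\|W^T Z\|_\infty \lesssim C^2 \sqrt{\log k}/(\eps \sqrt n)$, which produces the claimed bound $O(\|W\|_{1\to 2}^2 \sqrt{\log k}/(\eps\sqrt n))$. The main obstacle I anticipate is the symmetry-plus-variational-inequality step that converts the $\ell_2$ error against the sub-Gaussian noise $Z$ into an $\ell_\infty$ error against $W^T Z$; once this reduction is in place, the sub-Gaussian maximum bound is entirely routine, and the additional $W^\Delta$-membership clause follows for free from how $\mu$ is defined.
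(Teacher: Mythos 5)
Your proposal is correct and follows essentially the same route as the paper: the identical mechanism (each user applies $\VR_{\eps,C}$ to $W\ind_{x_i}$, the analyst averages and Euclidean-projects onto $W^\Delta$), the same variational inequality $\langle \hat\mu - \mu, v - \mu\rangle \le 0$ at $v = Wh_{\bx}$ to reduce the squared error to $\langle Z, \mu - Wh_{\bx}\rangle$, and the same passage to $\max_{j}|\langle W_j, Z\rangle|$ via the fact that $W^\Delta$ is the symmetric convex hull of the columns. The paper states the tail bound with an explicit failure probability $\beta$ and leaves the integration to expectation implicit, while you invoke the sub-Gaussian maximal expectation inequality directly and track the factor $2$ from $\mu - Wh_{\bx} \in 2W^\Delta$ explicitly, but these are cosmetic differences, not a different argument.
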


We provide the proof below for completeness.

\begin{proof}[Proof of \Cref{thm:proj}]
Let $C := \|W\|_{1 \to 2}$.
The algorithm works as follows:
\begin{itemize}
\item Each user $i \in [n]$ sends $y_i := \VR_{\eps,C}(W\ind_{x_i})$ to the analyst (where $\VR_{\cdot}(\cdot)$ is from \Cref{thm:vec-rand}).
\item The analyzer computes $Y := \frac{1}{n} \sum_{i \in [n]} y_i$ and then outputs $\mu := \argmin_{u \in W^\Delta} \|u - Y\|_2$.
\end{itemize}
The privacy guarantee of the algorithm follows from \Cref{thm:vec-rand}.

Let $z_i = y_i - W\ind_{x_i}$. From \Cref{thm:vec-rand}, $z_i$ is zero-mean and $\sigma$-sub-Gaussian for $\sigma = O(C/\eps)$. Let $Z := \frac{1}{n} \sum_{i \in [n]} z_i$; we then have that it is zero-mean and $\sigma'$-sub-Gaussian for $\sigma' = \sigma/\sqrt{n}$. By the definition of sub-Gaussian random variable and a union bound, for every $\beta > 0$, with probability $1 - \beta$ the following holds:
\begin{align} \label{eq:vertex-err}
|\left<W_j, Z\right>| \lesssim \frac{C^2}{\eps \sqrt{n}} \cdot \sqrt{\log(k/\beta)} & & \forall j \in [k],
\end{align}
where $W_j$ denote the $j$th column of $W$.

It is well known\footnote{See e.g., \cite[Lemma 3.1]{bubeck2015convex}.} that if we define $\mu$ as we did (i.e., as projection of $Y$ on $W^\Delta$), then we have
\begin{align*}
\left<\mu - w, \mu - Y\right> \leq 0,
\end{align*}
for all $w \in W^\Delta$.

Plugging in $w = W h_{\bx}$, we have
\begin{align*}
\|\mu - W h_{\bx}\|_2^2
&= \left<\mu - W h_{\bx}, \mu - Y\right> + \left<\mu - W h_{\bx}, Z\right> \\
&\leq \left<\mu - W h_{\bx}, Z\right> \\
&\leq \max_{w' \in W^\Delta} \left<w', Z\right> \\
&= \max_{j \in [k]} |\left<W_j, Z\right>|, 
\end{align*} 
where the equality follows from the fact $W^\Delta$ is the convex hull of $W_1, \dots, W_k, -W_1, \dots, -W_k$.

Putting together~\eqref{eq:vertex-err} and the above then yields
\begin{align*}
\E[\|\mu - W h_{\bx}\|_2^2] \lesssim \frac{C^2 \sqrt{\log k}}{\eps \sqrt{n}}. & & & &\qedhere
\end{align*}
\end{proof}

\newcommand{\tlambda}{\tilde{\lambda}}
\newcommand{\tmu}{\tilde{\mu}}
\newcommand{\tpi}{\tilde{\pi}}
\newcommand{\supp}{\mathrm{supp}}
\newcommand{\tM}{\tilde{M}}

\section{On the Lower Bound for Linear Queries from \cite{EdmondsNU20}}
\label{app:enu}

In this section, we provide details on how we can interpret the bounds of \cite{EdmondsNU20} as stated in the form of \Cref{thm:enu-convenient}. The lower bound in \cite{EdmondsNU20} is originally for the $\ell_\infty$-error; we make the observation below that a simple modification of their proof can be made so that it applies to $\mmse$. Note that, in addition to getting a stronger result in terms of error metric (because a lower bound on $\mmse$ implies the same lower bound on $\ell_\infty$-error), we also get a quantitatively stronger bound that does not depend on $k$ because we avoid having to take a union bound over $k$ queries (which was required for the proof for $\ell_\infty$-error in \cite{EdmondsNU20}).

In \cite[Section 3.4]{EdmondsNU20}, it was shown that w.l.o.g. it suffices to consider ``symmetric'' workload matrix $W$ (i.e., ones that can be written as $[W', -W']$ for some $W'$). We will thus do so throughout the rest of this section. Following their notation, we also assume that $W \in \R^{k \times \cX}$, i.e., $\cX$ is the input space and there are $k$ linear queries.  We also recall the following two lemmas from their paper. 

\begin{lemma}[{\cite[Lemma 11]{EdmondsNU20}}] \label{lem:transcript-kl-dist}
Let $\eps \in (0, 1]$.
For any distribution $\lambda_1, \dots, \lambda_m, \mu_1, \dots, \mu_m$ on $\cX$, distribution $\pi$ over $[m]$ and $\eps$-local DP randomizer $\cR$, we have\footnote{Here $\|M\|^2_{\ell_\infty \to L_2(\pi)} := \max_{\|x\|_{\infty} = 1} \|Mx\|_{L_2(\pi)}$ where $\|a\|_{L_2(\pi)} := \sqrt{\sum_{v \in [m]} \pi(v) a_v^2}$ . Note that we will not be dealing with this quantity further than here.}
\begin{align*}
\E_{v \sim \pi}[d_{\KL}(\cR(\lambda_v)^n~\|~\cR(\mu_v)^n)] \lesssim n \eps^2 \cdot \|M\|^2_{\ell_\infty \to L_2(\pi)}, 
\end{align*}
where $M \in \R^{m \times \cX}$ is a matrix such that $M_{v, x} = \lambda_v(x) - \mu_v(x)$.
\end{lemma}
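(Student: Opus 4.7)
My plan is to prove this by first reducing to the single-sample case via tensorization of KL, then bounding the single-sample KL by $\chi^2$-divergence and expanding carefully using the local DP property.

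Step 1 (Tensorization). Using the standard identity $d_{\KL}(P^n \| Q^n) = n \cdot d_{\KL}(P \| Q)$, it suffices to prove $\E_{v \sim \pi}[d_{\KL}(\cR(\lambda_v) \| \cR(\mu_v))] \lesssim \eps^2 \|M\|^2_{\ell_\infty \to L_2(\pi)}$, then multiply by $n$ at the end.

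Step 2 (Pass to $\chi^2$). Use $d_{\KL}(P \| Q) \le \chi^2(P \| Q)$. Writing $\cR(\lambda_v)(y) = \sum_x \lambda_v(x) \cR(y|x)$ and similarly for $\mu_v$, the numerator of $\chi^2(\cR(\lambda_v) \| \cR(\mu_v))$ at output $y$ becomes $\bigl(\sum_x M_{v,x} \cR(y \mid x)\bigr)^2$. Now fix any reference $x_0 \in \cX$ and set $f_x(y) := \log(\cR(y \mid x)/\cR(y \mid x_0)) \in [-\eps, \eps]$ (by $\eps$-local DP). Since $\sum_x M_{v,x} = \lambda_v(\cX) - \mu_v(\cX) = 0$, we can subtract the constant term $\cR(y \mid x_0)$ from each $\cR(y \mid x)$, and then exploit $e^{t} - 1 = t + O(\eps t)$ for $|t| \le \eps \le 1$ to approximate $\cR(y \mid x) - \cR(y \mid x_0) \approx \cR(y \mid x_0) \cdot f_x(y)$, up to a constant factor. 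For the denominator, local DP yields $\cR(\mu_v)(y) \gtrsim \cR(y \mid x_0)$. Combining these:
\[
\E_{v \sim \pi}\chi^2(\cR(\lambda_v)\|\cR(\mu_v)) \;\lesssim\; \sum_y \cR(y \mid x_0) \sum_v \pi(v) \Bigl(\sum_x M_{v,x} f_x(y)\Bigr)^2 \;=\; \sum_y \cR(y \mid x_0) \, \|M f(y)\|^2_{L_2(\pi)},
\]
where $f(y) := (f_x(y))_{x \in \cX} \in \R^\cX$.

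Step 3 (Invoke the operator norm). Since $\|f(y)\|_\infty \le \eps$ for every $y$, the definition of $\|M\|_{\ell_\infty \to L_2(\pi)}$ directly gives $\|M f(y)\|^2_{L_2(\pi)} \le \eps^2 \|M\|^2_{\ell_\infty \to L_2(\pi)}$. Plugging this in and using $\sum_y \cR(y \mid x_0) = 1$ completes the single-sample bound, and multiplying by $n$ from Step 1 finishes the proof.

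The main subtlety (and where I would spend the most care) is Step 2: the $\sum_x M_{v,x} = 0$ cancellation is essential because it converts what would otherwise be an $\eps$-independent bound into one that carries a factor of $\eps$ (coming from the size of the log-ratio $f_x(y)$). Without this cancellation, one gets only an $O(1)$ bound per coordinate rather than an $O(\eps^2)$ one. The linearization $e^t - 1 \approx t$ and the relation $\cR(\mu_v)(y) \asymp \cR(y \mid x_0)$ both require $\eps \le 1$, which is exactly the hypothesis of the lemma; these are where all the absolute constants hidden in $\lesssim$ appear, and one should verify them explicitly. Everything else (KL tensorization, $d_{\KL} \le \chi^2$, and the operator-norm step) is standard.
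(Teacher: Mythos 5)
The paper does not prove this lemma --- it is cited directly from \cite{EdmondsNU20} (their Lemma 11) --- so there is no in-paper proof to compare against; I will assess your argument on its own merits. Your strategy (tensorize the KL divergence, pass to $\chi^2$, exploit the row-sum cancellation $\sum_x M_{v,x} = 0$, linearize the likelihood ratios, then apply the $\ell_\infty \to L_2(\pi)$ operator norm) is the standard approach for such local DP lower bounds, and it does yield the stated bound.

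There is, however, one step that does not quite work as written. In Step 2, after writing $\cR(y\mid x) - \cR(y\mid x_0) = \cR(y\mid x_0)\bigl(e^{f_x(y)} - 1\bigr)$, you replace the factor $g_x(y) := e^{f_x(y)} - 1$ by $f_x(y)$ ``up to a constant factor.'' The pointwise relation $|g_x(y)| \asymp |f_x(y)|$ does hold when $|f_x(y)| \le \eps \le 1$, but it does not imply $\bigl|\sum_x M_{v,x}\, g_x(y)\bigr| \lesssim \bigl|\sum_x M_{v,x}\, f_x(y)\bigr|$: the entries $M_{v,x}$ have mixed signs, and a coordinatewise constant-factor approximation can change a signed sum by far more than a constant factor (it can even turn a zero sum into a nonzero one, or vice versa). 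So the displayed inequality at the end of Step~2 does not follow from the pointwise approximation you invoke.

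Fortunately the fix is immediate and preserves the structure of your argument: do not perform the $g \to f$ substitution at all. Since $|g_x(y)| = |e^{f_x(y)} - 1| \le (e-1)|f_x(y)| \le (e-1)\eps$ for $\eps \le 1$, you have $\|g(y)\|_\infty \lesssim \eps$ directly, and the operator-norm step applies to $g$ exactly as you applied it to $f$: $\|M g(y)\|_{L_2(\pi)}^2 \le \|g(y)\|_\infty^2 \cdot \|M\|_{\ell_\infty \to L_2(\pi)}^2 \lesssim \eps^2 \|M\|_{\ell_\infty \to L_2(\pi)}^2$. The remaining pieces --- the denominator bound $\cR(\mu_v)(y) \ge e^{-\eps}\cR(y\mid x_0)$, summing over $y$ using $\sum_y \cR(y\mid x_0)=1$, and tensorizing KL at the end --- are correct as written. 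So your ideas are right and the final bound is true, but the writeup should keep $g$ throughout rather than silently passing to $f$ inside the linear form.
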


We note that in \cite[Lemma 11]{EdmondsNU20}, the LHS is not exactly the same as in our version above. However, inspecting the very first inequality from their proof shows that their bound goes through the quantity on the LHS of \Cref{lem:transcript-kl-dist}.

The next lemma, which describes the properties of the hard distributions, is arguably the main technical contribution of the lower bound in \cite{EdmondsNU20}.

\begin{lemma}[{\cite[Lemma 21]{EdmondsNU20}}] \label{lem:hard-dist-original}
Let $W \in \R^{k \times \cX}$ be any symmetric workload matrix and $\alpha > 0$. There exist \footnote{In \cite{EdmondsNU20}, $\xi$ is related to the dual solution as $\xi = W \bullet U - \alpha$ where $U$ is the dual witness of the approximate factorization norm, i.e., one with $\gamma_2(W) = \frac{W \bullet U - \alpha \|U\|_1}{\gamma_{2}^{*}(U)}$. (See \cite[Section 2.3]{EdmondsNU20} for more details.) However, these specifics are not used in the remainder of the proof.} $\xi \geq 0$, and distributions $\tlambda_1, \dots, \tlambda_k, \tmu_1, \dots, \tmu_k$ on $\cX$ and $\tpi$ over $[k]$ such that
\begin{enumerate}[(i)]
\item $(W \tlambda_v)_i = 0$ for all $i, v \in [k]$.
\item for all $v \in \supp(\tpi)$, $(W \tmu_v)_v  \gtrsim \frac{\xi + \alpha}{\log(\|W\|_\infty/\alpha)}$.\footnote{Note that the $\|W\|_\infty$ term does not show up in~\cite{EdmondsNU20} since they assume that $\|W\|_\infty \leq 1$.}
\item Let $\tM$ be defined similarly as in \Cref{lem:transcript-kl-dist}. Then $\|\tM\|_{\ell_\infty \to L_2(\tpi)} \lesssim \frac{\xi}{\gamma_2(W, \alpha)}$.
\end{enumerate}
\end{lemma}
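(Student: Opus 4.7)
The approach is to extract the hard distributions from LP/SDP duality for $\gamma_2(\cdot;\alpha)$, then round the signed dual witness into honest probability measures via a dyadic decomposition. The approximate factorization norm $\gamma_2(W;\alpha)$ is the value of a convex program, and its dual produces a witness matrix $U \in \R^{\cX \times k}$ satisfying a dual-norm bound $\gamma_2^*(U) \le 1$ together with
\[
W \bullet U \;-\; \alpha \|U\|_1 \;=\; \gamma_2(W;\alpha).
\]
Set $\xi := W \bullet U - \alpha$ (the quantity used in the lemma statement). The symmetry of $W$ (writable as $[W',-W']$) lets me symmetrize $U$ so that each row has a ``reflection'' that cancels under multiplication by $W$; this is the structural ingredient that will produce distributions $\tlambda_v$ with $W\tlambda_v = 0$.

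The second step is to build the distributions. For each $v \in [k]$, I would take $\tlambda_v$ to be a uniform ``base'' distribution on $\cX$ plus a symmetrization of $U_v$ (positive and negative parts balanced against their $W$-reflections), so that $(W\tlambda_v)_i = 0$ holds identically for all $i$, giving property (i). The perturbed measure $\tmu_v$ is obtained by shifting a controlled mass along the sign pattern of $U_v$; the shift is calibrated so that $(W\tmu_v)_v$ picks up the diagonal entry $(WU^\top)_{v,v}$, and summing these diagonals reproduces $\mathrm{tr}(WU^\top) = W \bullet U$. A single row $v$ does not automatically yield a large $(W\tmu_v)_v$, so I would dyadically partition the entries of $U$ into $O(\log(\|W\|_\infty / \alpha))$ magnitude scales, pigeonhole onto the scale carrying the most mass, and let $\tpi$ be supported on exactly those $v$ whose row mass at the chosen scale is large. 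Restricting to one scale costs a factor of $\log(\|W\|_\infty/\alpha)$ in (ii).

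For property (iii), by construction $\tM$ is (up to normalization) a rescaled copy of the chosen dyadic piece of $U$. I would identify $\|\tM\|_{\ell_\infty \to L_2(\tpi)}$, up to constants, with a $\gamma_2^*$-type quantity on that piece of $U$ reweighted by $\tpi$; since $\gamma_2^*(U) \le 1$ and the normalization contributes a factor of $\xi / \gamma_2(W;\alpha)$, this yields the claimed bound $\|\tM\|_{\ell_\infty \to L_2(\tpi)} \lesssim \xi/\gamma_2(W;\alpha)$.

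\textbf{Main obstacle.} The delicate step is turning the signed LP witness $U$ into a genuine pair of probability distributions per $v$ while simultaneously (a) keeping $W\tlambda_v = 0$ exactly (which forces exploiting the symmetric structure of $W$ to cancel positive and negative parts), (b) obtaining a \emph{per-$v$} lower bound on $(W\tmu_v)_v$ in $\supp(\tpi)$ rather than only on average (this is what drives the dyadic-plus-pigeonhole argument and inflicts the $\log(\|W\|_\infty/\alpha)$ loss), and (c) matching the dual factorization norm $\gamma_2^*(U)$ to the operator norm $\|\tM\|_{\ell_\infty \to L_2(\tpi)}$ after the reweighting, which requires choosing $\tpi$ proportional to the $\ell_1$-mass of the selected dyadic piece in each row. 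The logarithmic factor in (ii) and the precise form of (iii) are both consequences of this rounding, not of the duality itself.
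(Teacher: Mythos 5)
The paper does not give its own proof of this lemma: it is a black-box citation of \cite[Lemma~21]{EdmondsNU20} (with the small cosmetic change that $\log(1/\alpha)$ is replaced by $\log(\|W\|_\infty/\alpha)$ to drop the normalization $\|W\|_\infty \le 1$ used there). The only manipulation this paper performs is in deriving \Cref{lem:hard-dist-easy} from \Cref{lem:hard-dist-original} via a mixture $\tmu_v \mapsto (1-\beta)\tlambda_v + \beta\tmu_v$. So there is no paper proof against which to compare your sketch; what follows is an assessment of the sketch on its own terms.

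Your high-level structure — extract a dual witness $U$ for the approximate factorization program, exploit the symmetry $W = [W',-W']$ so that mirroring $x \leftrightarrow \bar{x}$ annihilates $W$, build $\tlambda_v$ as a $W$-null measure and $\tmu_v$ as a signed shift along $U_v$, and dyadically decompose $U$ with a pigeonhole to pay the $\log(\|W\|_\infty/\alpha)$ factor in (ii) — is the right shape for a proof of this type, and the mechanism you name for the $\log$ loss is indeed where it should come from. However, there are two concrete loose ends. First, your normalization of the dual witness does not match the one the lemma implicitly uses. You take $\gamma_2^*(U) \le 1$ and then set $\xi := W\bullet U - \alpha$, but the footnote in the statement defines $\xi = W\bullet U - \alpha$ under the normalization in which $\gamma_2(W;\alpha) = (W\bullet U - \alpha\|U\|_1)/\gamma_2^*(U)$, i.e.\ where $\|U\|_1$ and $\gamma_2^*(U)$ both play roles; unless you fix $\|U\|_1 = 1$ rather than $\gamma_2^*(U) \le 1$, the identity $\gamma_2^*(U) = \xi/\gamma_2(W;\alpha)$ that you need for (iii) does not fall out, and your claim that ``the normalization contributes a factor of $\xi/\gamma_2(W;\alpha)$'' is asserted without being derived. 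Second, your construction of $\tlambda_v$ as ``uniform base plus a symmetrization of $U_v$'' is underspecified: with the symmetric workload, the uniform measure already satisfies $W\cdot(\text{uniform}) = 0$, so it is unclear what the extra symmetrized-$U_v$ term is for, and if it is needed you have not verified it preserves both nonnegativity and $W$-nullity simultaneously. Neither of these is necessarily fatal, but both would have to be resolved before the sketch constitutes a proof, and they interact directly with the constants in (ii) and (iii).
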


It will in fact be easier to work with the following version of the lemma, which is implicit in the proof of \cite[Theorem 22]{EdmondsNU20}:
\begin{lemma} \label{lem:hard-dist-easy}
Let $W \in \R^{k \times \cX}$ be any symmetric workload matrix and $\alpha > 0$. There exist  distributions $\tlambda_1, \dots, \tlambda_k, \tmu_1, \dots, \tmu_k$ on $\cX$ and $\tpi$ over $[k]$ such that
\begin{enumerate}[(i)]
\item $(W \tlambda_v)_i = 0$ for all $i, v \in [k]$.
\item for all $v \in \supp(\tpi)$, $(W \tmu_v)_v  \gtrsim \frac{\alpha}{\log(\|W\|_\infty/\alpha)}$.
\item Let $\tM$ be defined similarly as in \Cref{lem:transcript-kl-dist}. Then $\|\tM\|_{\ell_\infty \to L_2(\tpi)} \lesssim \frac{\alpha}{\gamma_2(W, \alpha)}$.
\end{enumerate}
\end{lemma}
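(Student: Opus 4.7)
The plan is to derive \Cref{lem:hard-dist-easy} from \Cref{lem:hard-dist-original} by a simple convex-combination trick that ``tunes away'' the auxiliary parameter $\xi$. First I would invoke \Cref{lem:hard-dist-original} on $W$ with the same $\alpha$ to obtain some $\xi \geq 0$, distributions $\tlambda_1,\dots,\tlambda_k$ and $\tmu_1,\dots,\tmu_k$ on $\cX$, and a distribution $\tpi$ on $[k]$ satisfying the three original properties. I would keep $\tpi$ and the $\tlambda_v$'s unchanged and replace each $\tmu_v$ by a mixture $\tmu'_v := (1-\beta)\tlambda_v + \beta \tmu_v$, where $\beta \in (0,1]$ is a single scalar to be chosen below. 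Each $\tmu'_v$ is automatically a distribution since it is a convex combination of distributions, and property~(i) of the easy lemma carries over verbatim since I have not touched the $\tlambda_v$'s.

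The key choice is $\beta := \min\{1,\, \alpha/\xi\}$, with the convention $\beta = 1$ when $\xi = 0$. The new difference matrix $\tM'$ (built from the $\tlambda_v$'s and the $\tmu'_v$'s as in \Cref{lem:transcript-kl-dist}) satisfies $\tlambda_v(x) - \tmu'_v(x) = \beta\bigl(\tlambda_v(x) - \tmu_v(x)\bigr)$ entrywise, so $\tM' = \beta\tM$ and hence $\|\tM'\|_{\ell_\infty \to L_2(\tpi)} = \beta\,\|\tM\|_{\ell_\infty \to L_2(\tpi)} \lesssim \beta\xi/\gamma_2(W,\alpha)$. Whether $\xi \leq \alpha$ (so $\beta = 1$ and $\beta\xi \leq \alpha$) or $\xi > \alpha$ (so $\beta\xi = \alpha$ exactly), this yields property~(iii) of the easy lemma with the target bound $\alpha/\gamma_2(W,\alpha)$.

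For property~(ii), I would expand $(W\tmu'_v)_v = (1-\beta)(W\tlambda_v)_v + \beta (W\tmu_v)_v = \beta (W\tmu_v)_v$, using property~(i) of the original lemma in the last step, and then substitute the original lower bound to obtain $\beta(\xi+\alpha)/\log(\|W\|_\infty/\alpha)$. When $\xi \leq \alpha$ this is already $\gtrsim \alpha/\log(\|W\|_\infty/\alpha)$, and when $\xi > \alpha$ one has $\beta(\xi+\alpha) = \alpha(1 + \alpha/\xi) \geq \alpha$, again giving the desired bound. I do not anticipate any real obstacle here: the main subtlety is simply unifying the two regimes $\xi \leq \alpha$ and $\xi > \alpha$ through the single definition of $\beta$; the degenerate case $\xi = 0$ is handled by the convention $\beta = 1$ (and is in fact vacuous, since property~(iii) of the original lemma would then force $\tM = 0$ and conflict with property~(ii)), and all the substantive work already sits inside \Cref{lem:hard-dist-original}.
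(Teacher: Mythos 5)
Your proposal is correct and follows exactly the same route as the paper: invoke \Cref{lem:hard-dist-original}, keep $\tpi$ and the $\tlambda_v$'s, replace each $\tmu_v$ by the mixture $(1-\beta)\tlambda_v + \beta\tmu_v$ with $\beta = \min\{1,\alpha/\xi\}$, and observe that both the lower bound in (ii) and the upper bound in (iii) scale linearly in $\beta$. The additional details you supply (the case split on $\xi\leq\alpha$ versus $\xi>\alpha$, the algebra for (ii) via property (i) of the original lemma, and the remark that $\xi=0$ is vacuous) are all correct and simply flesh out the paper's one-paragraph sketch.
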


\begin{proof}[Proof Sketch]
This is constructed by taking $\tlambda_1, \dots, \tlambda_k, \tmu_1, \dots, \tmu_k$ and $\tpi$ from \Cref{lem:hard-dist-original}. Then, replace each $\tmu_v$ by the mixture $(1 - \beta) \cdot \tlambda_v + \beta \cdot \tmu_v$ where $\beta = \min\{1, \alpha / \xi\}$, for all $v \in [k]$. Both the lower bound for $(W \tmu_v)_v$ and the upper bound for $\|\tM\|_{\ell_\infty \to L_2(\tpi)}$ scale linearly with $\beta$, yielding the desired bounds.
\end{proof}

We can now prove the following, which is a more qualitative version of the lower bound in \Cref{thm:enu-convenient}.

\begin{theorem} \label{thm:linear-lb-full}
Let $W \in \R^{k \times \cX}$ be any symmetric workload matrix. Any non-interactive $\eps$-local DP mechanism must incur an $\mmse$ at least $\Omega\left(\frac{\zeta(W, n)^2}{\eps^2 n} \cdot \frac{1}{\log\left(\frac{\eps^2 n \|W\|_\infty^2}{\zeta(W, n)^2}\right)^2}\right)$. 
\end{theorem}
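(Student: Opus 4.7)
The plan is to instantiate \Cref{lem:hard-dist-easy} at a carefully chosen $\alpha$, carry out a standard two-point hypothesis testing argument converting a hypothetical mMSE upper bound into a constant lower bound on the total variation distance between transcript distributions, and then contradict this using the KL bound of \Cref{lem:transcript-kl-dist}. The key parameter choice is $\alpha := \zeta(W,n)/(C_0 \eps \sqrt{n})$ for a sufficiently large absolute constant $C_0$. Since this $\alpha$ is a feasible value in $\zeta(W,n) = \min_{\alpha'} (\gamma_2(W;\alpha') + \alpha' \eps \sqrt{n})$, we automatically have $\gamma_2(W;\alpha) \geq \zeta(W,n) - \alpha\eps\sqrt{n} = (1 - 1/C_0)\,\zeta(W,n) = \Theta(\zeta(W,n))$. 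Applying \Cref{lem:hard-dist-easy} with this $\alpha$ yields distributions $\{\tlambda_v, \tmu_v\}_{v \in [k]}$ on $\cX$ and $\tpi$ on $[k]$ such that $(W\tlambda_v)_v = 0$, $(W\tmu_v)_v \geq \eta := c_1 \cdot \alpha/\log(\|W\|_\infty/\alpha)$ for all $v \in \supp(\tpi)$, and $\|\tM\|_{\ell_\infty \to L_2(\tpi)} \lesssim \alpha/\gamma_2(W;\alpha) \lesssim 1/(C_0 \eps \sqrt{n})$.

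Next I would suppose for contradiction that $\mmse(\cM; W) \leq M$ for some $M \ll \eta^2$ and derive $d_{\TV}(\cR(\tlambda_v)^n, \cR(\tmu_v)^n) = \Omega(1)$ for every $v \in \supp(\tpi)$. Fix $v$. Applying the mMSE bound under random inputs $\bx \sim \tlambda_v^n$ and $\bx \sim \tmu_v^n$ (averaging the worst-case deterministic bound over the input distribution) gives $\E_{\bx,\text{coins}}[(\hz_v - (Wh_{\bx})_v)^2] \leq M$ in each scenario. Combining this via the $L^2$-triangle inequality with the elementary bound $\Var_{\bx}[(Wh_{\bx})_v] \leq \|W\|_\infty^2/n$ yields $\E[(\hz_v - \mu_v^*)^2] \leq 2M + 2\|W\|_\infty^2/n$, where $\mu_v^* := \E_{\bx}[(Wh_{\bx})_v]$ equals $0$ under $\tlambda_v^n$ and is $\geq \eta$ under $\tmu_v^n$. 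Chebyshev on $\hz_v$ then shows that the event $\{\hz_v > \eta/2\}$ has probability bounded away from one under $\tlambda_v^n$ and bounded away from zero under $\tmu_v^n$, separating the two scenarios with $\Omega(1)$ probability. By the data processing inequality applied to the post-processing $\Pi \mapsto \hz_v$, this gives $d_{\TV}(\cR(\tlambda_v)^n, \cR(\tmu_v)^n) \geq \Omega(1)$ for every $v$ in the support.

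Finally I would average over $v \sim \tpi$, use Pinsker together with Jensen's inequality for the concave square root, and apply \Cref{lem:transcript-kl-dist}:
\begin{align*}
\Omega(1) \;\leq\; \E_{v \sim \tpi}\!\left[d_{\TV}(\cR(\tlambda_v)^n, \cR(\tmu_v)^n)\right]
\;\leq\; \sqrt{\tfrac{1}{2}\,\E_{v \sim \tpi}\!\left[d_{\KL}(\cR(\tlambda_v)^n \,\|\, \cR(\tmu_v)^n)\right]}
\;\lesssim\; \eps\sqrt{n}\cdot\|\tM\|_{\ell_\infty \to L_2(\tpi)}
\;\lesssim\; 1/C_0.
\end{align*}
Taking $C_0$ to be a sufficiently large absolute constant contradicts the $\Omega(1)$ lower bound, forcing $M \gtrsim \eta^2 = \Theta\bigl(\alpha^2/\log^2(\|W\|_\infty/\alpha)\bigr)$. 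Substituting $\alpha = \Theta(\zeta(W,n)/(\eps\sqrt{n}))$ and simplifying $\log(\|W\|_\infty/\alpha) = \Theta\bigl(\log(\|W\|_\infty^2 \eps^2 n/\zeta(W,n)^2)\bigr)$ yields exactly the claimed bound.

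The main obstacle I expect is the side condition $\|W\|_\infty^2/n \ll \eta^2$ that Chebyshev needs in order to cleanly separate the two scenarios; unfolded, this asks for $\zeta(W,n) \gg \eps\|W\|_\infty \log(\|W\|_\infty^2 \eps^2 n/\zeta(W,n)^2)$, which is precisely the regime in which the claimed lower bound meaningfully exceeds the ambient statistical noise scale $\|W\|_\infty^2/n$. In the complementary regime the bound is essentially saturated by the statistical noise itself and can be absorbed into the constants (or handled by a separate trivial argument), so the two regimes together give the theorem. All other steps — the parameter choice, the $L^2$-triangle inequality, Pinsker with Jensen, and the final unit-simplification of the log — are routine once the hard distributions of \Cref{lem:hard-dist-easy} are in hand.
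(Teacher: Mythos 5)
Your proposal is essentially the same as the paper's proof: same choice $\alpha \asymp \zeta(W,n)/(\eps\sqrt{n})$, same invocation of \Cref{lem:hard-dist-easy} to produce the hard pairs $(\tlambda_v,\tmu_v)$ and $\tpi$, the same observation that $\gamma_2(W;\alpha) = \Theta(\zeta(W,n))$ for this $\alpha$, the same two-point hypothesis test converting an assumed small mMSE into $d_{\TV}(\cR(\tlambda_v)^n,\cR(\tmu_v)^n) = \Omega(1)$, and the same Pinsker-plus-\Cref{lem:transcript-kl-dist} contradiction. The only surface difference is that you threshold at $\eta/2$ with Chebyshev while the paper thresholds at $10\alpha'$ and asserts correctness ``with probability at least $2/3$'' without detail; these are equivalent in substance. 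Worth noting: the obstacle you flag explicitly, namely that the distinguisher requires the sampling fluctuation of $(Wh_{\bx})_v$ (scale $\|W\|_\infty/\sqrt{n}$) to be dominated by $\eta$, is present but left implicit in the paper's ``it is not hard to see'' step as well, so your proposal is, if anything, more careful on that point than the paper's own write-up.
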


\begin{proof}
For notational convenience, we write $\zeta$ as a shorthand for $\zeta(W, n)$. Let $\alpha =\frac{C_1 \zeta}{\eps \sqrt{n}}$ where $C_1 \in (0, 0.1)$ is a sufficiently small constant.
Suppose for the sake of contradiction that there exists a non-interactive $\eps$-local DP protocol (whose randomizer is $\cR$) with $\mmse$ at most $(\alpha')^2$ for $\alpha' = \frac{C_2 \alpha}{\log(\|W\|_\infty/\alpha)}$ where $C_2 \in (0, 0.1)$ is a sufficiently small constant.

Recall the definition of $\zeta$; we must have $\zeta \leq \gamma_2\left(W, \frac{\zeta}{2 \eps \sqrt{n}}\right) + \frac{\zeta}{2}$. This gives
\begin{align*}
\gamma_2\left(W, \frac{\zeta}{2 \eps \sqrt{n}}\right) \geq \frac{\zeta}{2}.
\end{align*}
Since $\gamma_2(W, \cdot)$ is an increasing function, we thus have
\begin{align} \label{eq:kl-lower}
\frac{\alpha \eps \sqrt{n}}{\gamma_2(W, \alpha)} \leq C_1.
\end{align}

Let $\tlambda_1, \dots, \tlambda_k, \tmu_1, \dots, \tmu_k$ and $\tpi$ be as in \Cref{lem:hard-dist-easy}. By \Cref{lem:transcript-kl-dist} and \Cref{lem:hard-dist-easy}(iii), we have
\begin{align} \label{eq:kl-upper}
\E_{v \sim \pi}[d_{\KL}(\cR(\lambda_v)^n~\|~\cR(\mu_v)^n)] \lesssim \left(\frac{\alpha \eps \sqrt{n}}{\gamma_2(W, \alpha)}\right)^2.
\end{align}
On the other hand, since the protocol has $\mmse$ at most $(\alpha')^2$, we may use the following algorithm to distinguish $\cR(\lambda_v)^n$ from $\cR(\mu_v)^n$: let the analyst computes an estimate for $W h_{\bx}$ (using the $n$ samples from $\cR(\cdot)$ provided). If the estimate has absolute value less than $10\alpha'$, then output $\lambda_v$; otherwise, output $\mu_v$. From \Cref{lem:hard-dist-easy}(i)(ii), it is not hard to see that, for $C_2$ that is sufficiently small, this algorithm is correct with probability at least 2/3. This means that $d_{\TV}(\cR(\lambda_v)^n, \cR(\mu_v)^n) \geq 1/3$. Pinsker's inequality (i.e., \cref{eq:pinsker}) then yields $d_{\KL}(\cR(\lambda_v)^n~\|~\cR(\mu_v)^n) \gtrsim 1$. Comparing this with the above \cref{eq:kl-upper}, we get $\frac{\alpha \eps \sqrt{n}}{\gamma_2(W, \alpha)} \gtrsim 1$, which contradicts \cref{eq:kl-lower} when $C_1$ is sufficiently small.
\end{proof}

\section{Three-Round Algorithm}
\label{app:interactive}

In this section, we describe and analyze our interactive algorithm. The error guarantees of the algorithm are stated below. (Note that \Cref{thm:alg-interactive-app} is a more precise version of \Cref{thm:alg-interactive}.)

\begin{theorem} \label{thm:alg-interactive-app}
For any $W \in \R^{k \times k}$ and $n \geq \tOmega\left(\frac{\gamma_2(W)^4 \log k}{\|W\|_{\infty}^4 \eps^2}\right)$, there is a three-round $\eps$-DP algorithm for estimating quadratic forms on $W$ such that  the $\mse$ is $O\left(\frac{\|W\|_{\infty}^2}{\eps^2 n}\right)$.
\end{theorem}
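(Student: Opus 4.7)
The plan is to instantiate the three-round template from \Cref{sec:overview} concretely using the projection mechanism for round 1 and the matrix mechanism for round 2, and then a clipped scalar-sum for round 3. Fix a factorization $W = L^T R$ with $\|L\|_{1\to 2} = \|R\|_{1\to 2} = \sqrt{\gamma_2(W)}$; the row count of $L, R$ is irrelevant here since neither \Cref{thm:proj} nor \Cref{thm:vec-rand} has dimension-dependent error. Allocate $\eps/3$ of the privacy budget to each of the three rounds. In round 1, the users jointly invoke \Cref{thm:proj} with workload $R$, producing $\mu$ with $\E[\|\mu - R h_{\bx}\|_2^2] \lesssim \gamma_2(W) \sqrt{\log k}/(\eps \sqrt{n})$, and the analyst broadcasts $\mu$. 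In round 2, each user $i$ sends $o_i^L := \VR_{\eps/3,\sqrt{\gamma_2(W)}}(L \ind_{x_i})$, and the analyst computes and broadcasts $O^L = \frac{1}{n}\sum_i o_i^L = L h_{\bx} + Z^L$, where $Z^L$ is $\sigma'$-sub-Gaussian with $\sigma' = O(\sqrt{\gamma_2(W)}/(\eps\sqrt{n}))$. In round 3, each user $j$ forms the scalar
\[
A_j \;:=\; \langle O^L,\, R\ind_{x_j} - \mu \rangle + \langle L\ind_{x_j},\, \mu \rangle,
\]
clips it to $[-\tau,\tau]$ for a sufficiently large absolute constant times $\|W\|_\infty$, adds $\Lap(6\tau/\eps)$ noise, and sends the result $s_j$; the analyst outputs $\frac{1}{n}\sum_j s_j$. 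Privacy follows from basic composition of three $(\eps/3)$-DP randomizers per user.

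For the utility analysis, using $\langle Lh_{\bx}, R\ind_{x_j}\rangle = (h_{\bx}^T W)_{x_j}$ and $\langle L\ind_{x_j}, Rh_{\bx}\rangle = (W h_{\bx})_{x_j}$, a direct expansion of $A_j$ yields
\[
A_j = (h_{\bx}^T W)_{x_j} + (W h_{\bx})_{x_j} - h_{\bx}^T W h_{\bx} + \langle L(\ind_{x_j} - h_{\bx}),\, \mu - Rh_{\bx}\rangle + \langle Z^L,\, R\ind_{x_j} - \mu\rangle.
\]
Averaging over $j$ collapses the first three terms into $h_{\bx}^T W h_{\bx}$, annihilates the fourth, and converts the fifth into $\langle Z^L, Rh_{\bx} - \mu\rangle$. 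Hence, absent clipping and Laplace noise, the analyst's output equals $h_{\bx}^T W h_{\bx} + \langle Z^L, Rh_{\bx} - \mu\rangle$. The Laplace noise contributes $O(\tau^2/(\eps^2 n)) = O(\|W\|_\infty^2/(\eps^2 n))$ to the MSE. For the remaining error term, independence of $Z^L$ and $\mu$ together with sub-Gaussianity give $\E[\langle Z^L, Rh_{\bx} - \mu\rangle^2] \lesssim \sigma'^2 \cdot \E[\|\mu - Rh_{\bx}\|_2^2] \lesssim \gamma_2(W)^2 \sqrt{\log k}/(\eps^3 n^{3/2})$, which is at most $\|W\|_\infty^2/(\eps^2 n)$ precisely when $n \gtrsim \gamma_2(W)^4 \log k/(\|W\|_\infty^4 \eps^2)$---exactly the theorem's hypothesis, and the reason that $n$ lower bound is the right one.

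The hard part will be controlling the bias introduced by clipping, which I intend to handle via \Cref{lem:clip-err}. The decomposition of $A_j$ above shows it is the sum of a deterministic piece of magnitude at most $3\|W\|_\infty$ and two stochastic fluctuations: $\langle L(\ind_{x_j} - h_{\bx}), \mu - Rh_{\bx}\rangle$, bounded absolutely by $2\sqrt{\gamma_2(W)}\|\mu - Rh_{\bx}\|_2$, and $\langle Z^L, R\ind_{x_j} - \mu\rangle$, which is $O(\gamma_2(W)/(\eps\sqrt{n}))$-sub-Gaussian (using $\|R\ind_{x_j} - \mu\|_2 \leq 2\sqrt{\gamma_2(W)}$). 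Under the assumed lower bound on $n$, each fluctuation has \emph{typical} size $o(\|W\|_\infty)$, so choosing $\tau/\|W\|_\infty$ to be a large enough absolute constant makes $\Pr[|A_j| > \tau]$ polynomially small in $n$: for the $Z^L$ term this is direct sub-Gaussian concentration, while for $\|\mu - Rh_{\bx}\|_2$ one uses the sub-Gaussian tail of the noise driving $\VR$ inside the proof of \Cref{thm:proj} (yielding in fact $\|\mu - Rh_{\bx}\|_2 \lesssim \gamma_2(W)\sqrt{\log k}/(\eps\sqrt n)$ with high probability). Combined with a straightforward polynomial-in-$\gamma_2(W)$ bound on $\E[A_j^4]$ from the same decomposition, \Cref{lem:clip-err} renders the clipping MSE negligible compared with $\|W\|_\infty^2/(\eps^2 n)$. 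Summing the three error contributions via $(a+b+c)^2 \lesssim a^2 + b^2 + c^2$ then yields the stated $O(\|W\|_\infty^2/(\eps^2 n))$ MSE bound.
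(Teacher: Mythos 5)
Your proposal is correct and tracks the paper's construction (Appendix, \Cref{thm:alg-interactive-app}) closely: round~1 runs the projection mechanism to get $\mu \approx R h_{\bx}$; round~2 has users send $\VR$-privatized $L\ind_{x_i}$ and the analyst broadcasts the aggregate $O^L$; round~3 has each user release a clipped, Laplace-perturbed scalar. The telescoping identity for $A_j$, the error term $\E\left[\langle Z^L, Rh_{\bx}-\mu\rangle^2\right] \lesssim \gamma_2(W)^2\sqrt{\log k}/(\eps^3 n^{3/2})$, and the use of \Cref{lem:clip-err} all match the paper (cf.\ \cref{eq:cross-term-interactive,eq:theta-squared}). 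The one organizational difference is that the paper sends $a_i = \langle L\ind_{x_i},\mu_R\rangle + \kappa_i$ as a \emph{separate} scalar in round~2 --- no clipping needed there since $|\langle L\ind_{x_i},\mu_R\rangle| \leq \|W\|_\infty$ because $\mu_R \in R^\Delta$ --- and reserves the clip for the round-3 quantity $\langle Y^L, R\ind_{x_i}-\mu_R\rangle$. You merge both into one scalar $A_j$ and clip the whole thing, which is a perfectly fine variant (it even saves one per-user message).

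Where your write-up could be tightened is the clipping tail bound. You decompose $A_j$ into a deterministic piece, a ``fluctuation'' $\langle L(\ind_{x_j}-h_{\bx}),\,\mu - Rh_{\bx}\rangle$ controlled via $\|\mu - Rh_{\bx}\|_2$, and the sub-Gaussian piece $\langle Z^L, R\ind_{x_j}-\mu\rangle$; you then invoke a high-probability tail on $\|\mu - Rh_{\bx}\|_2$ that you only sketch (and the parenthetical magnitude you quote is also off: from $\E[\|\mu-Rh_{\bx}\|_2^2] \lesssim \gamma_2(W)\sqrt{\log k}/(\eps\sqrt n)$, the typical $\ell_2$-distance is $\sqrt{\gamma_2(W)}\,(\log k)^{1/4}/(\eps^{1/2}n^{1/4})$, not what you wrote). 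None of this is actually needed: since $\mu \in R^\Delta$ always, both $\langle L\ind_{x_j},\mu\rangle$ and $\langle Lh_{\bx}, R\ind_{x_j}-\mu\rangle$ are \emph{deterministically} bounded by $O(\|W\|_\infty)$ (as $\tfrac12(R\ind_{x_j}-\mu) \in R^\Delta$), so $A_j - \langle Z^L, R\ind_{x_j}-\mu\rangle$ is always $O(\|W\|_\infty)$. The only tail bound you need is the one-line sub-Gaussian tail on $\langle Z^L, R\ind_{x_j}-\mu\rangle$, exactly as the paper does, and then \Cref{lem:clip-err} closes the argument cleanly.
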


\begin{proof}
Throughout the proof, we assume that $n \geq Q \cdot \frac{\gamma_2(W)^4 \log k}{\|W\|_{\infty}^4 \eps^2} \cdot \log\left(\frac{10\gamma_2(W) \log k}{\|W\|_{\infty} \eps}\right)$, where $Q$ is a sufficiently large constant. Recall from the definition of $\gamma_2(W)$ that there must exist $L, R$ such that $W = L^T R$ and $\|L\|_{1 \to 2} = \|R\|_{1 \to 2} = \sqrt{\gamma_2(W)} =: C$.

\textbf{Algorithm Description.}
Let $\bareps = \eps/4$ and $\tau = 4\|W\|_\infty$.
The algorithm works as follows:
\begin{itemize}[leftmargin=*]
\item \textbf{First Round: } Run the $\bareps$-local DP protocol from \Cref{thm:proj} to get an estimate $\mu^R$ of $R h_{\bx}$.
\item \textbf{Second Round:} 
\begin{itemize}
\item The analyzer forwards $\mu^R$ to all users.
\item Each user $i \in [n]$ sends the following to the analyst:
\begin{itemize}
\item $y^L_i \gets  \VR_{\bareps,C}(L \ind_{x_i})$ (where $\VR_{\cdot}(\cdot)$ is from \Cref{thm:vec-rand}).
\item $a_i \gets \left<L\ind_{x_i}, \mu_R\right> + \kappa_i$ where $\kappa_i \sim \Lap(2\|W\|_{\infty}/\bareps)$ back to the analyst.
\end{itemize}
\end{itemize}
\item \textbf{Third Round:} 
\begin{itemize}
\item The analyst then computes $Y^L := \frac{1}{n} \sum_{i \in [n]} y^L_i$ and forwards it to the users.\\
\item Each user $i \in [n]$ sends $v_i \gets \clip_{\tau}(\left<Y^L, R\ind_{x_i} - \mu_R\right>) + z_i$ where $z_i \sim \Lap(2\tau/\bareps)$. 
\end{itemize}
\end{itemize}
Finally, the analyst outputs $\frac{1}{n}\left(\sum_{i \in [n]} a_i\right) + \frac{1}{n}\left(\sum_{i \in [n]} v_i\right)$.

\paragraph{Privacy Analysis.}
Each user $i$'s input is used four times:
\begin{itemize}
\item To produce $\mu^R$. This step is $\bareps$-DP by  \Cref{thm:proj}.
\item To produce $y^L_i$. This step is $\bareps$-DP by  \Cref{thm:vec-rand}.
\item To produce $a_i$. From the guarantee of \Cref{thm:proj}, $\mu_R$ belong to $R^\Delta$. As a result, $|\left<L\ind_{x_i}, \mu_R\right>| \leq \|W\|_{\infty}$. In other words, the sensitivity of $\left<L\ind_{x_i}, \mu_R\right>$ (as a function of $x_i$) is at most $2\|W\|_{\infty}$. Since $\kappa_i$ is sampled from $\Lap(2\|W\|_{\infty} / \bareps)$, this step is also $\bareps$-DP.
\item To produce $v_i$. Due to clipping, the sensitivity of $\clip_{\tau}(\left<Y^L, R\ind_{x_i} - \mu_R\right>)$ (as a function of $x_i$) is at most $2\tau$. Thus, since we are adding a noise $z_i$ drawn from $\Lap(2\tau/\bareps)$, this step is also $\bareps$-DP.
\end{itemize}
Hence, by the basic composition theorem, the entire algorithm is $\eps$-local DP as desired.

\paragraph{Utility Analysis.}
First, notice that, for a fixed $\mu_R \in R^\Delta$, we have
\begin{align*}
h_{\bx}^T W h_{\bx} = \left<L h_{\bx}, R h_{\bx}\right> &= \left<L h_{\bx}, \mu_R\right> + \left<L h_{\bx}, R h_{\bx} - \mu_R\right> \\
&= \frac{1}{n} \sum_{i\in[n]} \left<L \ind_{x_i}, \mu_R\right> + \frac{1}{n} \sum_{i \in [n]} \left<L h_{\bx}, R \ind_{\bx_i} - \mu_R\right>.
\end{align*}

Let $z^L_i = y^L_i - L\ind_{x_i}$. From \Cref{thm:vec-rand}, $z^L_i$ is zero-mean and $\sigma$-sub-Gaussian for $\sigma = O(C/\eps)$. Let $Z^L := \frac{1}{n} \sum_{i \in [n]} z^L_i$; we then have that $Z^L$ is zero-mean and $\sigma'$-sub-Gaussian for $\sigma' = \sigma/\sqrt{n}$. Note that $Y^L = L h_{\bx} + Z^L$. Furthermore, let us write $\theta_i$ as a shorthand for $\clip_{\tau}(\left<Y^L, R\ind_{x_i} - \mu_R\right>) - \left<Y^L, R\ind_{x_i} - \mu_R\right>$. 

The MSE can be bounded by
\begin{align}
&\E\left[\left(\frac{1}{n}\left(\sum_{i \in [n]} a_i\right) + \frac{1}{n}\left(\sum_{i \in [n]} v_i\right) - h_{\bx}^T W h_{\bx}\right)^2\right] \nonumber\\
&= \E\left[\left(\frac{1}{n} \sum_{i \in [n]} \kappa_i + \frac{1}{n}\sum_{i \in [n]} z_i + \left<Z^L, Rh_{\bx} - \mu_R\right> + \frac{1}{n}\sum_{i \in [n]} \theta_i \right)^2\right] \nonumber\\
&\lesssim \E\left[\left(\frac{1}{n} \sum_{i \in [n]} \kappa_i\right)^2\right] + 
\E\left[\left(\frac{1}{n}\sum_{i \in [n]} z_i\right)^2\right] + \E\left<Z^L, Rh_{\bx} - \mu_R\right>^2 + \E\left[\left(\frac{1}{n}\sum_{i \in [n]} \theta_i\right)^2\right] \nonumber \\
&\lesssim \frac{\|W\|_{\infty}^2}{\eps^2 n} + \E\left<Z^L, Rh_{\bx} - \mu_R\right>^2 + \frac{1}{n} \sum_{i \in [n]}  \E\left[\theta_i^2\right], \label{eq:mse-interactive-expanded}
\end{align}
where the last step is by applying the Cauchy--Schwarz inequality to the last term.

To handle the middle term in \cref{eq:mse-interactive-expanded}, note that $Z^L$ is independent of $\mu_R$ and, as stated earlier, is $\sigma'$-sub-Gaussian. Thus, we have
\begin{align}
\E\left<Z^L, Rh_{\bx} - \mu_R\right>^2 &= \E_{\mu_R} \E_{Z^L} \left<Z^L, Rh_{\bx} - \mu_R\right>^2 \nonumber\\
&\lesssim \E_{\mu_R} (\sigma')^2 \|Rh_{\bx} - \mu_R\|_2^2 \nonumber\\
(\text{From \Cref{thm:proj}}) &\lesssim (\sigma')^2 \cdot \frac{C^2 \sqrt{\log k}}{\eps \sqrt{n}} \nonumber\\
&\lesssim \frac{C^4 \sqrt{\log k}}{\eps^3 n^{3/2}} \nonumber\\
&\lesssim \frac{\|W\|_{\infty}^2}{\eps^2 n}, \label{eq:cross-term-interactive}
\end{align}
where the last inequality is due to our assumption on $n$.

As for the last term in \cref{eq:mse-interactive-expanded}, notice that
\begin{align*}
\left<Y^L, R\ind_{x_i} - \mu_R\right> = \left<Z^L, R\ind_{x_i} - \mu_R\right> + \left<L h_{\bx}, R\ind_{x_i} - \mu_R\right>.
\end{align*}
Since $\mu_R$ belongs to $R^\Delta$, we have $\frac{1}{2}\left(R\ind_{x_i} - \mu_R\right) \in R^\Delta$, which implies that $|\left<L h_{\bx}, R\ind_{x_i} - \mu_R\right>| \leq 2\|W\|_{\infty}$ and $\|R\ind_{x_i} - \mu_R\|_2 \leq 2C$. As such, we have
\begin{align*}
\E \left<Y^L, R\ind_{x_i} - \mu_R\right>^4 &\lesssim \E \left<Z^L, R\ind_{x_i} - \mu_R\right>^4 + \|W\|_{\infty}^4 \\
&\lesssim (\sigma')^4 \cdot C^4 + \|W\|_{\infty}^4 \\
&\lesssim \frac{C^{8}}{\eps^4 n^2} + \|W\|_{\infty}^4 \\
&\lesssim \|W\|_{\infty}^4,
\end{align*}
where the last inequality is due to our assumption on $n$ and from $\gamma_2(W) \geq \|W\|_\infty$.

Meanwhile, since $\tau = 4\|W\|_{\infty}$, we have
\begin{align*}
\Pr[|\left<Y^L, R\ind_{x_i} - \mu_R\right>| > \tau]
&\leq \Pr[|\left<Z^L, R\ind_{x_i} - \mu_R\right>| > \tau/2] \\
&\leq \exp\left(-\Omega(\tau / (\sigma' \cdot 2C))^2\right) \\
&= \exp\left(-\Omega(\eps \|W\|_{\infty} \sqrt{n}  / C^2)^2\right) \\
&\leq \frac{1}{\eps^4 n^2},
\end{align*}
where the last inequality is again due to our assumption on $n$ and from $\gamma_2(W) \geq \|W\|_\infty$.

Thus, we may apply \Cref{lem:clip-err} to conclude that
\begin{align} \label{eq:theta-squared}
\E[\theta^2_i] \lesssim \sqrt{\|W\|_{\infty}^4 \cdot \frac{1}{\eps^4 n^2}} = \frac{\|W\|_{\infty}^2}{\eps^2 n}.
\end{align}
Combining \cref{eq:mse-interactive-expanded,eq:cross-term-interactive,eq:theta-squared}, the MSE of the estimate is at most $O\left(\frac{\|W\|_{\infty}^2}{\eps^2 n}\right)$ as desired.
\end{proof}

\subsection{On Separating Non-Interactive and Interactive Local DP}

We end by observing that our interactive local DP algorithm (\Cref{thm:alg-interactive-app}) together with the non-interactive lower bound (\Cref{thm:lb-non-interactive-from-linear-queries} and particularly, the more quantitative version, \Cref{thm:linear-lb-full}) gives an asymptotic separation on the $\mse$ achievable by $\eps$-local DP interactive algorithms and those that are non-interactive, as long as we pick $W$ together with $n \in \N$ such that the following holds:
\begin{itemize}
\item $n \gtrsim \frac{\gamma_2(W)^4 \log k}{\|W\|_{\infty}^4} \cdot \log\left(\frac{10\gamma_2(W) \log k}{\|W\|_{\infty}}\right)$ and
\item $\frac{\zeta(W; n)}{\|W\|_{\infty}} \geq \omega(\log n)$,
\end{itemize}
where the first condition is from \Cref{thm:alg-interactive-app} (and letting $\eps = 1$) and the second condition ensures that the error from \Cref{thm:alg-interactive-app} is asymptotically larger than the lower bound from \Cref{thm:linear-lb-full}.

It is simple to verify that it suffices to pick $W$ such that $\|W\|_{\infty} = 1, \gamma_2(W, 0.1), \gamma_2(W) = \Theta(k^c)$ for any constant $c > 0$ and pick $n$ to precisely satisfy the bound in the first condition. Again, it is not hard to construct such a matrix e.g., by taking $W$ to be a random $(k \times k)$ matrix where each entry is an i.i.d. Rademacher random variable, which yields $\gamma_2(W, 0.1), \gamma_2(W) = \Theta(\sqrt{k})$ w.h.p.~\cite{LinialMSS07}.

\end{document}